\newtheorem{theorem}{Theorem}
\theoremstyle{definition}
\newtheorem{definition}{Definition}[section]
\newcommand{\placeholder}[1]{}
\DeclareMathOperator*{\argmin}{arg\,min}
\begin{document}
\title{Qudit Machine Learning}
  
\author{Sebastián Roca-Jerat}
\email{sroca@unizar.es}
\author{Juan Román-Roche}
\author{David Zueco}
\email{dzueco@unizar.es}
\address{Instituto de Nanociencia y Materiales de Aragón (INMA),
  CSIC-Universidad de Zaragoza, Zaragoza 50009,
  Spain}
\address{Departamento de Física de la Materia Condensada, Universidad de Zaragoza, Zaragoza 50009, Spain}


\begin{abstract}
We present a comprehensive investigation into the learning capabilities of a simple $d$-level system (qudit). Our study is specialized for classification tasks using real-world databases, specifically the Iris, breast cancer, and MNIST datasets. We explore various learning models in the metric learning framework, along with different encoding strategies. In particular, we employ data re-uploading techniques and maximally orthogonal states to accommodate input data within low-dimensional systems. Our findings reveal optimal strategies, indicating that when the dimension of input feature data and the number of classes are not significantly larger than the qudit's dimension, our results show favorable comparisons against the best classical models. This trend holds true even for small quantum systems, with dimensions $d<5$ and utilizing algorithms with a few layers ($L=1,2$). However, for high-dimensional data such as MNIST, we adopt a hybrid approach involving dimensional reduction through a convolutional neural network. In this context, we observe that \emph{small} quantum systems often act as bottlenecks, resulting in lower accuracy compared to their classical counterparts.

\end{abstract}

\maketitle


\section{Introduction}
\label{sec:intro}

Machine learning (ML) has been integrated into our
daily routines. This broad adoption is closely linked to
advancements in hardware capabilities. Alongside its broad adoption, ML has brought attention to the need for accelerating algorithms and to the energy costs of computation, emphasizing the importance of exploring alternative computing paradigms \cite{Schuld2014, patterson2021carbon}.

The interest in using quantum processors for ML tasks stems from the hope that certain problems could be solved more quickly on a quantum computer and/or that quantum computing could be more energy-efficient, thereby reducing the environmental impact of ML \cite{auffeves2022}. However, whether these
expectations are met for a wide range of problems remains unclear \cite{havlivcek2019supervised, Hamerly2019, albrecht2023quantum, Ebadi2022, Tang2019, Tang2019a, Arrazola2020, Huang2021, Schuld2022}. Investigating these aspects is crucial to either substantiate or refute these claims.

Concurrently, there is the rise of Physical Neural Networks (PNNs), which consist of dedicated physical systems, such as nonlinear optical or mechanical resonators, manipulated to produce specific computational outputs \cite{yao2020protonic, wright2022deep,Stern2023}. To date, PNNs have  been explored within classical systems.

In this paper, we bridge the gap between quantum ML and PNNs by considering a quantum physical system as an example of a PNN. We specifically examine a simple d-level system (qudit), wherein transitions can be induced \cite{hrmo2023native, chi2022programmable, ringbauer2022universal, low2023control, gimeno2021broad, chiesa2023blueprint}, and the population of levels post-manipulation can be measured \cite{gomez2022dispersive}. Our focus is on how this system performs in classification tasks. This minimal quantum system provides an opportunity to compare the universality, expressiveness, and performance of a $d$-dimensional unitary operation against standard classical algorithms. We will explore various strategies based on dataset dimension $D_x$, qudit dimension $d$, and the number of classes $K$. The primary goal is to understand how a minimal physical system, governed by quantum mechanics, can execute ML tasks, without necessarily claiming any superiority over other systems. We believe that the concepts discussed and the results obtained may also be of use in the field of quantum-inspired algorithms, where they seek to exploit concepts and tools from quantum mechanical theory to develop potentially more efficient classical algorithms \cite{shi2020quantum, garciamolina2023global, Arrazola2020}.

\subsection{Our Work in Context}

The use of quantum processors for machine learning tasks is a rapidly growing field. It is important to note that this brief overview may not cover all relevant works in the field. 

Previous research has focused on quantum circuits with single and two-qubit gates, which are fundamental architectures in NISQ quantum computers. Various models, such as Kernel and quantum neural networks, have been explored for classification problems . Different learning and measurement protocols have been compared in these studies
\cite{ Altaisky2001,lloyd2014,  Schuld2014, Rebentrost2014,farhi2018classification,  mari2020transfer, Schuld2020, Bartkiewicz2020,schuld2021supervised, sancho2022quantum, rudolph2022generation}. See also the reviews \cite{biamonte2017quantum, perdomo2018opportunities, schuld2019, cerezo2021variational}. 

In general, most of these works address the use of quantum systems from a digital perspective, i.e. in the development of algorithms and circuits that can be implemented in any quantum computer that meets a number of basic requirements (connectivity, native gate set, etc.). However, in the spirit of the aforementioned PNNs, we are interested in a more analog-like perspective and there are seminal works that align more closely with the philosophy of our current research. These studies investigate the capabilities of the smallest quantum systems, \emph{i.e.} two level systems and test their performance. They also introduce the concept of data re-uploading \cite{perez2020data}, a concept that we will discuss in this work as well. Reference \cite{perez2021one} demonstrated that a two-level quantum system with data re-uploading can serve as a universal approximator for any function. This is achieved by leveraging results from quantum Fourier analysis, which have been recently generalized to the qudit scenario \cite{casas2023multi}. Moreover, the generalization of data re-uploading to qudits has been discussed in \cite{wach2023data}, where the qudit dimension is fixed to the number of classes. The implementation of classification techniques on a superconducting qutrit carried out in \cite{cao2023encoding} also proves to be of great interest for the context of the present work. Here we generalise to a completely arbitrary case in terms of number of levels and number of classes to be classified.

\emph{
Our aim here is to test the learning capabilities of a single quantum system with $d$ levels (qudit), conducting an extensive study that employs various types of learning, encoding, and measurement algorithms.
}

\subsection{Main Results and Organization}

In this work, we take a broader perspective on learning with qudits within the metric learning paradigm \cite{lloyd2020quantum}, discussing both implicit and explicit approaches \cite{nghiem2021unified}, each with its own advantages and drawbacks. We explore different types of encodings, such as data re-uploading and the use of maximally orthogonal states, to accommodate the dimension of the data set within few-level quantum machines.

Regarding the results, we discuss various prototypical datasets with unique characteristics that allow us to explore the efficiency of the different methods discussed. Additionally, we test the potential of hybrid classical-quantum models suitable for cases with high-dimensional data sets.

For data sets with dimensions and  a number of classes not significantly larger than the dimension of the qudits, we provide results that compare favorably to the best classical models. In such cases, a few-layer protocol with a small quantum system of dimension $d\sim 3-5$ suffices. However, for higher-dimensional data sets, hybrid models perform acceptably, but the small dimension of the quantum system cannot outperform advanced classical techniques like convolutional neural networks.

The rest of the manuscript is organized as follows: in section \ref{sec:learningwq} we present our theoretical backbone where we cover the control of our physical system, the optimisation method and how data is encoded. In section \ref{sec:results} we test these tools using different datasets and different models according to the needs of each problem as well as discussing the computational complexity. Finally, in section \ref{sec:Dissipation} we check what effect the main sources of noise have on our physical system in both the training and testing phases. We leave the conclusions and final remarks for section \ref{sec:conclusions}.

\section{Learning with qudits}\label{sec:learningwq}

\begin{figure*}[t!]
    \centering
    \includegraphics[width = \textwidth]{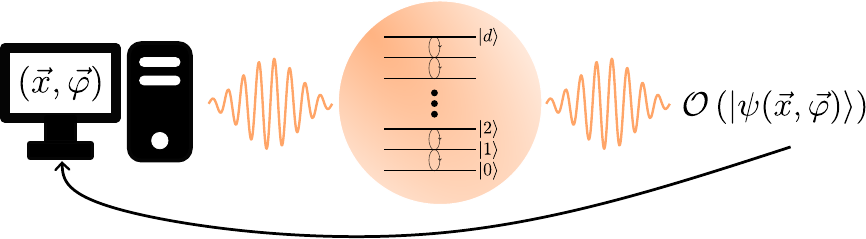}
    \caption{Variational optimization process on a d-level system. Classical devices are used to perform the optimization task and to control the operations to implement on the physical system (the qudit). After coherently controlling the system, we perform some measurements that allows us to extract the information need to compute the objective function and iterate the optimization process.}
    \label{fig:vqaqudit}
\end{figure*}

Our focus is on classifiers based on quantum variational algorithms (QVA) for addressing supervised learning problems.
In the supervised learning scenario, the data set consists of input vectors containing the feature values and their corresponding labels, 
$\{\vec x_i, y_i\}_{i=1}^N$   
with
${\rm dim }(\vec x_i)= D_x \; \& \; y_i=1,..,K $.
We are assuming that data can be classified in $K$-classes.
Besides, the data is splitted in training ($N_{training}$) and testing ($N_{test}$). However,  for certain problems it is convenient to further split the training data set into validation ($N_{validation}$), which is used to assess that no overfitting is happening and proper training (the one with which the parameters of the model are optimized).
\\

Typically, QVAs can be divided into three steps: \emph{i)} encoding the data onto a wave function $\vec x_i \to | \psi (\vec x_i) \rangle$, \emph{ii)} evolving it to a final state $ U (\vec \varphi)| \psi (\vec x_i) \rangle $. However, these two steps can be encapsulated in a step that depends on the input data set and the parameters $\vec \varphi$, namely,
\begin{equation}
\label{Uxphi}
|\psi (\vec x_i, \vec \varphi)\rangle = \hat{U}(\vec{x}_i; \vec{\varphi})|0\rangle \; .
\end{equation}
The final part of the algorithm, \emph{iii)} involves performing a measurement $o_i \equiv \langle 0 | U_\theta^\dagger (\vec x_i, \vec \varphi)   {\mathcal  O} U_\theta (\vec x_i, \vec \varphi) |0 \rangle$, which can be estimated up to a precision of $1/\sqrt{N_{\rm shots}}$ in any measurement-based experimental device \cite{nielsen2002quantum}, being $N_{\rm shots}$ the number of measurements performed to obtain the statistics of the observable $o_i$. This is the general scenario; however, in the next section, we provide explicit algorithms together with their complexity estimations for both implicit and explicit learning with qudits. This process undergoes an iterative procedure, as illustrated in Figure \ref{fig:vqaqudit}. By optimizing specific parameters ($\vec \varphi$), our objective is to minimize a predetermined objective function that depends on the output $o_i$, often referred to as the cost/loss function, which quantifies the level of misclassification in our model \cite{biamonte2017quantum, perdomo2018opportunities, cerezo2021variational}. 
\subsection{Qudit unitaries}

This formulation is quite general, and specific details regarding the unitary operation need to be considered, especially in relation to hardware constraints. In this context, we consider a $d$-level system with a fixed (time-independent) Hamiltonian $H_0$, which has eigenvalues $|l\rangle$, where $l=0, ..., d-1$. Furthermore, we assume that a series of two-level rotations can be applied. These rotations serve as the fundamental operations achievable using conventional Electronic Paramagnetic Resonance (EPR) techniques. Importantly, these operations are \emph{universal} for a single qudit, enabling the generation of any wave function within the $d$-dimensional Hilbert space. In the interaction picture, utilizing a simple monochromatic microwave pulse to coherently control the system, these rotations are given by \cite{castro2022optimal, chiesa2023blueprint}
\begin{equation}
\label{eq:GivensRot}
    \hat{R}_{k,l}(\theta, \phi) = \exp\left(-i \frac{\theta}{2} G_{k,l}(\phi) \right) \; .
\end{equation}
Here, 
\begin{equation}\label{eq:givenspref}
G_{k,l}(\phi) = \left(\cos\phi\cdot \hat{\sigma}_{k,l}^x - \sin\phi\cdot \hat{\sigma}^y_{k,l}\right) \; ,
\end{equation}
where $\hat{\sigma}_{k,l}$ denotes the Pauli matrices derived from the eigenvalues $|k\rangle$ and $|l\rangle$, \emph{i.e.}  ${\sigma}_{k,l}^x = |k \rangle \langle l | + {\rm 
 h.c.}$ and $\sigma_{k,l}^y = i |k \rangle \langle l | + {\rm 
 h.c.}$ .

Each rotation between levels $k$ and $l$ in our system is characterized by two parameters to be tuned: $\theta$ and $\phi$. Consequently, a layer of rotations involving all accessible levels within our qudit will encompass $2(d-1)$ parameters, assuming ladder-like coupling between adjacent levels. It is important to note that these two parameters do not carry the same significance in the resulting state they produce. One parameter, $\theta$, governs the population transfer between the two states, while the other parameter, $\phi$, controls the relative phase they acquire. This distinction becomes relevant when encoding the data, as we will explore later.
The complete ansatz entails applying $L$ layers of these rotations, where the rotation angles store both the classical data, $\vec x$, and the parameters to be optimized, $\vec \varphi$. Therefore, the most general qudit unitaries that we will consider through this work are,
\begin{equation}\label{eq:ansatz}
    \hat{U}(\vec{x}; \vec{\varphi}) = \prod_l^L \left(\prod_i^{d-1} \hat{R}_{i, i+1}^l\left(\theta_i(\vec{x}, \vec{\varphi}), \phi_i(\vec{x}, \vec{\varphi})\right)\right) \; .
\end{equation}
We emphasize that  the rotation angles $\theta_i$ and $\phi_i$ are general functions of both $\vec x$ and $\vec \varphi$.  This is discussed in the next subsection and in the rest of the manuscript the explicit dependence of these angles is dropped to ease the notation.

\subsection{Metric learning}\label{subsec:training}

In our study, we focus on metric learning, specifically its quantum version \cite{lloyd2020quantum, nghiem2021unified}.
Metric learning is well-suited for classifiers, particularly in image classification and recognition.
This type of learning aims to map the data points onto a feature space equipped with a metric such   the distance between points belonging to the same class are minimized while ones  belonging to different classes are maximally separated.

In the quantum realm, 
the unitary $\hat{U}(\vec{x}; \vec{\varphi})$ in Eq. \eqref{Uxphi} maximizes the distances between states belonging to different classes while minimizing them for states within the same class.
In this work, we will consider that the data corresponds to $K$ classes. The algorithm aims to learn how to classify the input vector into one of these classes by minimizing a certain loss function, $\mathcal{L}$, with respect to the parameters $\vec \varphi$, which are the same for every point $\vec x$ in the training ensemble.
The literature \cite{nghiem2021unified} discusses two main approaches: implicit and explicit algorithms.

\subsubsection{Implicit approach}

In the implicit approach, the training data defines $K$ ensembles, one for each class. Using the encoded input vectors as given by Eq. \eqref{Uxphi}, $K$ ensembles  can be defined as follows:
\begin{equation}\label{eq:datapoint_ensemble}
  \rho_k := \frac{1}{N_k} \sum_{i_k}|\psi(\vec x_{i_k}; \vec \varphi)\rangle\langle\psi(\vec x_{i_k}; \vec \varphi)| \; .
\end{equation}
Here, $N_k$ represents the number of training points belonging to class $k$. 
\\
Besides, the loss function to be minimized reads \cite{lloyd2020quantum, nghiem2021unified}
\begin{align}
\label{eq:loss_implicit}
    \mathcal{L}_{\rm{I}} = 1 - \frac{1}{K}\sum_k \, {\rm tr} (\rho_k^2)
     + \frac{2}{K}\sum_{k<l} \, {\rm tr} (\rho_k\rho_l)
\end{align}

We notice that the different terms of the loss function can be  rewritten as,
\begin{equation}\label{eq:trace_terms}
    {\rm tr} (\rho_k\rho_l) = \frac{1}{N_k N_l}\sum_{i_k, i_l}^{N_k, N_l}|\langle 0| U^\dagger(\vec x_{i_k}; \vec \varphi)U(\vec x_{i_l}; \vec \varphi)|0\rangle|^2\ .
\end{equation}
A possible algorithm is to prepare the states $|\psi_{i_k,i_l} \rangle \equiv U^\dagger(\vec{x}_{i_k}; \vec{\varphi}) U(\vec{x}_{i_l}) |0\rangle$, measuring it in the computational basis and extracting its frequency $p_0$. The complexity scales as $O(\epsilon^{-2})$. Here, the largest sampling error is approximately $O(N_{\rm shots}^{-1/2})$, where $N_{\rm shots}$ is the number of shots used to estimate $p_0$ \cite{sancho2022quantum}.

Thus, this approach aims to maximize the purity of the ensemble of states belonging to the same class while moving them away from ensembles of other classes by minimizing the trace of the crossed ensembles.

\subsubsection{Explicit approach}
\label{subsub:explicit}

In the explicit approach, the algorithm utilizes predefined reference states (referred to as centers) one for each class, denoted as $\{ | \psi_k^{\rm R} \rangle \} _{k=1}^K$. Introducing the density matrix,
\begin{equation}
\label{sigmak}
    \sigma_k := | \psi_k^{\rm R} \rangle
    \langle  \psi_k^{\rm R} |\ ,
\end{equation}
the loss function in this case is given by
\begin{equation}\label{eq:loss_explicit}
    \mathcal{L}_{\rm{E}} = 1 - \frac{1}{K}\sum_k \, {\rm tr} (\rho_k\sigma_k)\ .
\end{equation}
Here, the algorithm aims to minimize the distance between the training data and the reference states. It is worth noting that the last term in the above equation is nothing but the fidelity, ${\rm tr} (\rho_k\sigma_k) = F (\rho_k, \sigma_k) :=  \langle \psi_k^{\rm R} |\rho_k | \psi_k^{\rm R} \rangle$, which can again be expressed in a similar fashion as Eq. \eqref{eq:trace_terms}. Using the definition given in Eq. \eqref{sigmak} for $\sigma_k$, ${\rm tr} (\rho_k\sigma_k)$ reads as
\begin{equation}
    {\rm tr} (\rho_k\sigma_k) = \frac{1}{N_k}\sum_{i_k}^{N_k}|\langle 0|U^\dagger(\vec x_{i_k}; \vec \varphi)U_k^R| 0\rangle|^2\ ,
\end{equation}
being $U_k^R$ the unitary operation needed to generate the reference state $|\psi_k^R\rangle$, $|\psi_k^R\rangle = U_k^R|0\rangle$. Pretty much as in \eqref{eq:trace_terms}, the algorithm consists of computing the state $|\psi_{i_k}\rangle \equiv U^\dagger(\vec{x}_{i_k}; \vec{\varphi})U_k^R |0\rangle$ and then computing the frequency $p_0$ in the computational basis.

The performance of the algorithm depends on the selection of the centers. Ideally, they should be as separate as possible. In terms of quantum states, orthogonal states are maximally separated. Therefore, it is natural to choose $\langle \psi_k^{\rm R} | \psi_{k^\prime}^{\rm R} \rangle = \delta_{k, k^\prime}$. However, it is possible that $d < K$. In such cases, we argue below that maximally orthogonal states (MOS) can be employed.

In both approaches, the strategy to follow in terms of optimisation will be the same: minimise the corresponding loss function, $\mathcal{L}$, with respect to the control parameters $\vec\varphi$. In the implicit approach, this will imply to maximize the purity of each ensemble, defined in Eq. \eqref{eq:datapoint_ensemble}, as well as minimize the overlap between states corresponding to data points from different classes. In the explicit approach, the optimization will seek to maximize the overlap between the states generated by each data point and its corresponding reference state. Both approaches, however, can be related through geometrical arguments, as we discuss below.

\subsubsection{Geometric interpretation}\label{subsec:GeometricInterpretation}

In the next section, we are going to test and compare both approaches in actual data sets.
Before that, we find it convenient to provide a geometric way of relating them.

As is clear from the above definitions, the loss functions, \(\mathcal{L}\), we want to minimise are related to distances between states in Hilbert space. There are several metrics one can define, see Chapter 9 in Ref. \cite{wilde2013quantum}. Here, we focus on two: the trace distance and the Hilbert-Schmidt distance.

For two general states, \(\rho\) and \(\sigma\), the trace distance, \(D_T(\rho, \sigma)\), is defined as \(D_T(\rho, \sigma) = \frac{1}{2}\lVert\rho - \sigma\rVert_1\), where \(\lVert A \rVert_1 = \text{Tr}\sqrt{A^\dagger A}\) is the trace norm, while the Hilbert-Schmidt distance, \(D_{\text{HS}}\), is defined as \(D_{\text{HS}} = \text{Tr}\left[(\rho - \sigma)^2\right] = \lVert\rho - \sigma\rVert_2^2\), where \(\lVert A \rVert_2 = \sqrt{\text{Tr}\left(A^\dagger A\right)}\) is the Frobenius norm.

Now, we can re-write Eq. \eqref{eq:loss_implicit} for the particular case of \(K = 2\), i.e., a binary classification problem, as
\begin{equation}
    \label{eq:implicit_HS}
    \mathcal{L}_{\text{I}} = 1 - \frac{1}{2}D_{\text{HS}}(\rho_1, \rho_2).
\end{equation}

Theorem \ref{theo} shows that, for binary classification tasks, the implicit and explicit approaches defined in Eq. \eqref{eq:implicit_HS} and Eq. \eqref{eq:loss_explicit}, respectively, are directly related.

\begin{theorem}
\label{theo}
In a binary classification task, the minimizers $\rho_{k = 1, 2}$ of the explicit loss function $\mathcal{L}_{\rm{E}}$ \eqref{eq:loss_explicit} with reference states $\sigma_{k=1,2}$ \eqref{sigmak} are also the minimizers of the implicit loss function $\mathcal{L}_{\rm{I}}$ \eqref{eq:loss_implicit} if and only if the trace distance between the reference states,  $D_T(\sigma_1, \sigma_2)$, is maximal.
\end{theorem}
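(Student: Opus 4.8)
The plan is to characterize, separately and explicitly, the minimizers of each loss as density matrices on the qudit Hilbert space, and then intersect the two characterizations. First I would treat $\mathcal{L}_{\rm E}$: since each reference $\sigma_k = |\psi_k^{\rm R}\rangle\langle\psi_k^{\rm R}|$ is a rank-one projector, the fidelity term satisfies ${\rm tr}(\rho_k\sigma_k) = \langle\psi_k^{\rm R}|\rho_k|\psi_k^{\rm R}\rangle \in [0,1]$, with the value $1$ attained exactly when $\rho_k = \sigma_k$. Because the two terms in \eqref{eq:loss_explicit} decouple over $k$, the explicit loss is minimized (at value $0$) precisely by the pair $\rho_1 = \sigma_1$, $\rho_2 = \sigma_2$.

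Next I would analyze $\mathcal{L}_{\rm I}$ through its Hilbert--Schmidt form \eqref{eq:implicit_HS}: minimizing it is the same as maximizing $D_{\rm HS}(\rho_1,\rho_2) = {\rm tr}\rho_1^2 + {\rm tr}\rho_2^2 - 2\,{\rm tr}(\rho_1\rho_2)$. Using ${\rm tr}\rho_k^2 \le 1$ with equality iff $\rho_k$ is pure, and ${\rm tr}(\rho_1\rho_2) \ge 0$ with equality iff the two states have orthogonal support — both consequences of positive semidefiniteness — one gets $D_{\rm HS}(\rho_1,\rho_2)\le 2$, saturated exactly when $\rho_1$ and $\rho_2$ are mutually orthogonal pure states. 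Hence $\mathcal{L}_{\rm I}$ is minimized (again at $0$) precisely on orthogonal pure-state pairs.

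Combining the two, the explicit minimizer $(\sigma_1,\sigma_2)$ belongs to the set of implicit minimizers if and only if $\sigma_1$ and $\sigma_2$ are orthogonal (they are pure by construction), i.e. $\langle\psi_1^{\rm R}|\psi_2^{\rm R}\rangle = 0$. The last step translates this into a statement about $D_T$: for two pure states the operator $\sigma_1 - \sigma_2$ has eigenvalues $\pm\sqrt{1-|\langle\psi_1^{\rm R}|\psi_2^{\rm R}\rangle|^2}$ on the plane they span and $0$ elsewhere, so $D_T(\sigma_1,\sigma_2) = \tfrac12\lVert\sigma_1-\sigma_2\rVert_1 = \sqrt{1-|\langle\psi_1^{\rm R}|\psi_2^{\rm R}\rangle|^2}$, which reaches its maximal value $1$ iff the reference states are orthogonal. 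Chaining the equivalences yields both directions: if $D_T$ is maximal the explicit minimizer is an implicit minimizer, and if $D_T$ is not maximal then $D_{\rm HS}(\sigma_1,\sigma_2)<2$, so $(\sigma_1,\sigma_2)$ cannot minimize $\mathcal{L}_{\rm I}$.

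The ingredients — purity bounded by $1$, nonnegativity of ${\rm tr}(\rho_1\rho_2)$, and the pure-state trace-distance formula — are all standard, so the real point to get right is conceptual rather than computational: the theorem must be read with $\rho_1,\rho_2$ ranging over all admissible density matrices (equivalently, assuming the ansatz \eqref{eq:ansatz} is expressive enough to realize any target $\rho_k$, and that the data ensembles \eqref{eq:datapoint_ensemble} are not otherwise constrained). I would state this domain assumption explicitly at the outset, since without it ``the minimizers'' need not be pure or orthogonal and the clean equivalence can break down. I expect this framing of the optimization domain to be the main obstacle; the rest is bookkeeping.
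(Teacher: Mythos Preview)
Your argument is correct and in fact cleaner than the paper's, but it proceeds along a genuinely different line. The paper does not characterize the minimizers explicitly; instead it works approximately: it argues that an $\mathcal{L}_{\rm E}$-minimizer has $F(\rho_k,\sigma_k)\ge 1-\epsilon$, translates this into $\|\rho_k-\sigma_k\|_1\le 2\sqrt{\epsilon}$ via the Fuchs--van de Graaf-type inequality, invokes the equivalence (up to dimension-dependent constants) of $D_T$ and $D_{\rm HS}$ to reduce minimizing $\mathcal{L}_{\rm I}$ to maximizing $\|\rho_1-\rho_2\|_1$, and then applies the triangle inequality $\|\rho_1-\rho_2\|_1\le\|\rho_1-\sigma_1\|_1+\|\rho_2-\sigma_2\|_1+\|\sigma_1-\sigma_2\|_1$ to conclude that maximality of $\|\rho_1-\rho_2\|_1$ forces $\|\sigma_1-\sigma_2\|_1$ to be maximal.

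By contrast, you pin down the exact minimizers of both losses directly: $\mathcal{L}_{\rm E}$ is minimized uniquely at $(\rho_1,\rho_2)=(\sigma_1,\sigma_2)$, $\mathcal{L}_{\rm I}$ is minimized exactly on orthogonal pure pairs, and the pure-state formula $D_T(\sigma_1,\sigma_2)=\sqrt{1-|\langle\psi_1^{\rm R}|\psi_2^{\rm R}\rangle|^2}$ closes the equivalence. This is more elementary, avoids the $\epsilon$-bookkeeping and the $D_T$--$D_{\rm HS}$ norm comparison, and handles both directions of the ``if and only if'' on equal footing (the paper's triangle-inequality step really only bounds $\|\rho_1-\rho_2\|_1$ from above, so sufficiency is left somewhat implicit there). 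The paper's route, on the other hand, is more robust if one later wants to relax to approximate minimizers or noisy $\rho_k$, since it is already phrased in terms of $\epsilon$-closeness. Your closing caveat about the optimization domain (that $\rho_1,\rho_2$ must be allowed to range freely, i.e.\ the ansatz is assumed sufficiently expressive) is well placed and applies equally to the paper's proof.
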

\begin{proof}

As discussed in section \ref{subsub:explicit}, we can express the explicit loss function as $\mathcal{L}_{\rm E} = 1 - \frac{1}{2}\sum_k F (\rho_k, \sigma_k)$.
The minimizers, $\rho_{k=1, 2}$ of $\mathcal{L}_{\rm E}$ will satisfy $F(\rho_k, \sigma_k) \geq (1 - \epsilon)$ with $\epsilon$ some arbitrarily small positive number.
This, in turn, implies that $\rho_k$ and $\sigma_k$ satisfy $||\rho_k - \sigma_k||_1\leq 2\sqrt{\epsilon}$, i.e. that $\sigma_k$ and $\rho_k$ are $\sqrt{\epsilon}$-close in trace distance \cite{wilde2013quantum}.
On the other hand, minimizing the implicit loss function, $\mathcal{L}_{\rm{I}}$, is equivalent to maximizing the Hilbert-Schmidt distance $D_{\rm HS}(\rho_1, \rho_2)$. The trace distance, $D_T(\rho_1, \rho_2)$, and thus, $|| \rho_1 - \rho_2 ||_1$ is bounded above and below by multiples of the Hilbert-Schmidt distance, $D_{\rm HS}(\rho_1, \rho_2)$ \cite{coles2019strong}. Thus, maximizing $D_{\rm HS}(\rho_1, \rho_2)$ is equivalent to maximizing $|| \rho_1 - \rho_2 ||_1$. Now, we can relate $|| \rho_1 - \rho_2 ||_1$ to $||\rho_k - \sigma_k||_1$ through the following relation:
\begin{equation}
\begin{aligned}
    ||\rho_k - \rho_{k'}||_1 {} &\leq ||\rho_k - \sigma_k||_1 + \\
    & + ||\rho_{k'} - \sigma_{k'}||_1 + ||\sigma_{k} - \sigma_{k'}||_1 \,.
\end{aligned}
\end{equation}
Then, the minimizers, $\rho_{k=1,2}$, of $\mathcal{L}_{\rm E}$ and thus maximizers of $\mathcal{F}(\rho_k, \sigma_k)$ obey:
\begin{equation}
    ||\rho_k - \rho_{k'}||_1\leq 4\sqrt{\epsilon} + ||\sigma_{k} - \sigma_{k'}||_1 \,.
\end{equation}
Therefore, in order for them to be maximizers of $||\rho_k - \rho_{k'}||_1$ and consequently minimizers of $\mathcal{L}_{\rm{I}}$, the trace distance between centers $||\sigma_{k} - \sigma_{k'}||_1 / 2$ has to be maximal.
\end{proof}

This theorem justifies choosing as centers states that maximize their mutual distance.
These form what we call a set of maximally orthogonal states (MOS). 
\begin{definition}\label{def:mos}
Let $\Phi = \{|\psi_k\rangle \}_{k=1}^K$ be a set of $K$ pure states such that $|\psi_k\rangle \in \mathcal{H}^d$, being $\mathcal{H}^d$ the Hilbert space of the qudit.
We say that $\bar\Phi$ forms a MOS-set iff $\bar\Phi = \argmin_{\{\Phi\}} E_W$ with $E_W$ an energy defined as
\begin{equation}
    E_W(\Phi) =  \sum_{i\neq j} W\left( |\langle \psi_i | \psi_j \rangle|\right) \ ,
\label{eq:energyMOS}
\end{equation}
where $W$, the weighting function, is an arbitrary real-valued and increasing function well-defined in the interval $[0, 1]$.
\end{definition}
With this definition we can use numerical methods to obtain MOS for any configuration $(d, K)$ of qudit dimension, $d$, and number of states, $K$.
Based on this idea, in order to deal with any dataset and any dimensionality of our physical system, we developed a numerical tool to obtain such MOS.
We leave the details of the tool, as well as a discussion of the different possibilities that arise from the choice of the weighting function, $W$, for appendix \ref{app:MOS} and another work \cite{GApreprint}.


\subsection{Encoding strategies}\label{subsec:encoding}

The main difference between our approach and others is that we do not apply independent ansätze for the enconding  \cite{larose2020robust}. 
In the algorithms discussed here, the model learns a metric, so the encoding itself already maximizes the distances between different classes.
 Therefore, only a single ansatz is considered, where the classical data and the parameters to be optimised are tied together \cite{perez2020data, lloyd2020quantum, nghiem2021unified}.
Therefore, taking into account the unitary transformation in Eq. \eqref{eq:ansatz}, we need to discuss the chosen function to map the input vectors and variational parameters to rotation angles, \emph{i.e} 
\begin{equation}
g: (\vec x, \vec \varphi) \mapsto (\vec \theta, \vec \phi)    
\end{equation}
 This, in turn, will give rise to the embbeding
 \begin{align}
     \nonumber
     \mathcal{E}: & \mathbb{R}^{D_x} \rightarrow \mathcal{H}^d
     \\
     &\vec x \mapsto |\psi(\vec x)\rangle
 \end{align}
through our ansatz, Eq. \eqref{eq:ansatz}.

In this work, we choose a function $g$ similar to those used in classical neural networks. Specifically, our optimization parameters, $\vec \varphi$, are divided into a weight matrix, $\bar \omega$, and a bias vector, $\vec b$, such that $g: \vec x \mapsto \vec x' = \bar\omega\cdot\vec x + \vec b$. This transformed vector, denoted as $\vec x'$, is then used to determine the rotation angles $\vec \theta$ and $\vec \phi$.

Several comments are relevant here. First, the dimension of $\vec x'$, denoted as $D_{x'}$, does not have to be equal to the dimension of $\vec x$, i.e., $D_{x'} \neq D_x$. Even though there are $2(d-1)$ parameters in a layer of our ansatz (as shown in Eq. \eqref{eq:ansatz}), we can handle a higher dimension by employing sublayers \cite{perez2020data}. We simply split the vector $\vec x'$ into tuples of dimension $2(d-1)$. It is important to differentiate between using sublayers to accommodate larger data dimensions and using layers in the ansatz to increase the number of optimized parameters. The former allows us to handle larger data dimensions, while the latter adds non-linearity to the model. Second, the assignment of rotation angles $(\vec \theta, \vec \phi)$ from the transformed data $\vec x'$ is arbitrary and will impact the model's performance, as discussed below.


To assess the effectiveness of different encodings, we can study the decision boundaries drawn by our ansatz
with  a concrete example \cite{larose2020robust}. We take a three-level system (a qutrit) with $d = 3$ for classifying data into three classes ($K = 3$). The reference states are the lowest-energy states from the orthonormal basis of the qutrit: $|k\rangle$ for class $k$ with $k = 0, 1, 2$. Suppose the data dimension is $D_x = 4$. We start with the following transformation: $x'_i = \bar\omega_{ii}\cdot x_i + b_i$. In other words, the matrix $\bar \omega$ is chosen to be diagonal, and the angles are assigned as
$(\theta_i, \phi_i) = (x'_{2i}, x'_{2i+1})$, so $\theta_i = \theta_i(x_{2i})$ and $\phi_i = \phi_i(x_{2i+1})$. Since it is a qutrit, we have two available transitions and, thus, four rotation angles in our ansatz. Applying this transformation to the ground state $|0\rangle$ yields the expression in Eq. \eqref{eq:EncodingExample}:
\begin{equation}\label{eq:EncodingExample}
    |\psi(\vec x)\rangle = \hat U(\vec x, \bar \omega, \vec b)|0\rangle = \sum_k c_k|k\rangle
\end{equation}
where the coefficients $c_k$ are as follows: $c_0 = \cos\frac{\theta_0}{2}$, $c_1 = -i\sin\frac{\theta_0}{2}\cos\frac{\theta_1}{2}e^{i\phi_0}$, and $c_2 = -\sin\frac{\theta_0}{2}\sin\frac{\theta_1}{2}e^{i(\phi_0+\phi_1)}$.
From this expression, we can observe that not all the parameters play a role in the decision boundaries. For example, when computing the overlaps with the reference states $|k\rangle$ as in Eq. \eqref{eq:loss_explicit}, the relative phases introduced by the angles $\phi$ do not have any impact. Furthermore, since each angle in this case depends only on a single component of $\vec x$, our model is trained with only half of the information provided by each data point.

We can address this issue in various ways. The simplest approach is to initialize the qutrit in the state $|+\rangle = 1/\sqrt{K}\sum_k|k\rangle$, so the coefficients $c_k$ in Eq. \eqref{eq:EncodingExample} become a more intricate combination of rotation angles, and the projection onto the reference states no longer cancels out the relative phases. Alternatively, we could change the basis in which the reference states are tailored. For instance, instead of choosing the eigenstates of $\sigma_z$, $\{|0\rangle, |1\rangle\}$, for the orthonormal qubit case, we could use the eigenstates of $\sigma_x$, $\{|+\rangle, |-\rangle\} = \{(|0\rangle + |1\rangle)/\sqrt{2}, (|0\rangle - |1\rangle)/\sqrt{2}\}$. In practice, the outcome would be the same.
 
Another option is to apply a different transformation $g$ such that $g:\mathbb R^{D_x} \mapsto \mathbb R^{2(d-1)}$. 
We can consider $\bar\omega$ as a general $(D_{x'}, D_x)$-matrix with $D_{x'} = 2(d-1)$, ensuring that the output always fits into one sublayer. 
However, each $x'_i$ is now a function of all the components of $\vec x$: $x'_i = \sum_{j}\bar\omega_{ij}\cdot x_j + b_i$.
Again, if the initial state were to be the ground state, $|0\rangle$, we would still face the issue of utilizing only half of the parameters when assigning the angles as before, i.e., $(\theta_i, \phi_i) = (x'_{2i}, x'_{2i+1})$. 
However, initializing in the $|+\rangle$ state should address this problem. In Table \ref{tab:EncodingsG}, we summarize the three main choices for the function $g$ that will be used in section \ref{subsec:EncComp} in our  simulations.

\begin{table}[h!]
\centering
\resizebox{0.75\columnwidth}{!}{%
\begin{tabular}{|ccc|}
\hline
\multicolumn{3}{|c|}{Encodings}                                                                                 \\ \hline
\multicolumn{1}{|c|}{$g$}   & \multicolumn{1}{c|}{Transformation}                              & Initialization \\ \hline
\multicolumn{1}{|c|}{$g_1$} & \multicolumn{1}{c|}{$x_i' = \bar\omega_{ii}\cdot x_i + b_i$}       & $|0\rangle$    \\ \hline
\multicolumn{1}{|c|}{$g_2$} & \multicolumn{1}{c|}{$x_i' =\bar\omega_{ii}\cdot x_i + b_i$}        & $|+\rangle$    \\ \hline
\multicolumn{1}{|c|}{$g_3$} & \multicolumn{1}{c|}{$x_i' = \sum_j\bar\omega_{ij}\cdot x_j + b_i$} & $|+\rangle$    \\ \hline
\end{tabular}%
}
\caption{Different encodings that will be used in Section \ref{subsec:EncComp}.}
\label{tab:EncodingsG}
\end{table}

It is important to note that when multiple layers are involved in the ansatz, either to accommodate data with higher dimensions or to introduce non-linearity through more optimizing parameters, or when the reference states are non-orthogonal, the resulting wave function will generally exhibit sufficient complexity for all parameters to influence the decision boundaries.


Having seen different ways of choosing the function that maps our classical data into the control parameters of our physical system, we can also consider other ways of defining the control itself. For example, an interesting scenario arises when the rotations defined in Eq. \eqref{eq:GivensRot} are not defined on two levels of the orthonormal basis of the qudit, but rather on a series of maximally orthogonal states (MOS).
This becomes relevant when the number of classes exceeds the available levels. Let $\{|\varphi_k\rangle\}_{k=1}^{k=K}$ denote the set of $K$ maximally orthogonal states obtained using our genetic algorithm [Cf. App. \ref{app:MOS}].
We redefine our ansatz as the set of pulses between these ``virtual'' states. Consequently, the Pauli matrices in the $G(\phi)$ term of Eq. \eqref{eq:GivensRot} will now take the form $\tau_{k, l}^{x(y)} = (-i)|\varphi_k\rangle\langle\varphi_l| + \text{h.c.}$, with $k<l$.
With this definition, we gain access to $K - 1$ operations that can be decomposed into a series of pulses between the self-states of the physical system, to be implemented on a real platform \cite{castro2022optimal, chiesa2023blueprint}.
By operating between these maximally orthogonal states, it is reasonable to expect that the generated states belonging to different classes are brought closer to regions that are far from each other, potentially ``saving work'' for the optimizer.

Lastly, we will consider a scenario where $D_x \gg d$, making it impractical to apply successive sublayers to encode the original dimension. In such cases, we will explore various classical dimensionality reduction techniques, such as Principal Component Analysis (PCA) or Convolutional Neural Networks (CNNs), to assess their effectiveness in combination with our model.

\section{Results}\label{sec:results}

After establishing the theoretical foundation of learning with qudits, we proceed to test our algorithms on real-world datasets. We have selected three distinct datasets for this purpose: the \emph{Iris} dataset \cite{ucirepo, fisher1936use}, the \emph{Breast Cancer Wisconsin (Diagnostic)} dataset \cite{ucirepo}, and the \emph{MNIST database of handwritten digits} \cite{deng2012mnist}, as described in section \ref{subsec:datasets}. Each dataset possesses unique characteristics that enable us to explore the efficacy of our tools in diverse scenarios.

In all the simulations that will be presented throughout the manuscript, full access to the wave function of the system is available at all times. This is usually referred to as \emph{state-vector simulations}, i.e. the measurement process does not involve any quantum state tomography type protocol.

In section \ref{subsec:EncComp}, we examine various encoding strategies (ansatz) to determine the approach that yields the best results. Subsequently, in section \ref{subsec:MethodComp}, we employ the selected encoding strategy to compare the performance of different loss functions (implicit or explicit method) in various classification problems.
In section \ref{subsec:HDdata}, we address the specific case of datasets with dimensions significantly larger than that of the qudit, exemplified by the MNIST digits dataset. Here, we study a hybrid algorithm that combines classical dimension reduction techniques with a quantum processing unit (our qudit).
Finally, in section \ref{subsec:hardware_eff}, we discuss the different resources needed to experimentally implement our classifier.

\subsection{Datasets}\label{subsec:datasets}

\begin{itemize}
  \item[$\blacksquare$] \underline{\emph{Iris}} \cite{ucirepo, fisher1936use}
\end{itemize}
Fisher's database of different species of the Iris plant is one of the most (if not the most) used dataset for classification problems. It contains 3 different species (classes) and for each one, 50 data points. Each data point is composed by 4 features which are the length and width of both the sepal and petal of each plant. It is a perfect dataset for an initial benchmark due to the small size of the dataset, the low dimensionaty of the data and the small number of classes. Here, we used a training set of $N_{training} = 30$ data points, 10 for each class, and the rest is left for the test set, $N_{test} = 120$.

\begin{itemize}
  \item[$\blacksquare$] \underline{\emph{Breast Cancer Wisconsin (Diagnostic)}} \cite{ucirepo}
\end{itemize}
In this dataset, there are 569 instances of 10 features belonging to cell nucleus from breast masses. It only has two classes: malignant or benign. This is an example of a dataset that comprises a large enough number of features and instances (and still manageable by a single qudit) and defines a binary classification task. We used $N_{training} = 113$ instances for the training set (around the 20\% of the total) and the rest, $N_{test} = 456$, is used for testing.

\begin{itemize}
  \item[$\blacksquare$] \underline{\emph{MNIST Digits}} \cite{deng2012mnist}
\end{itemize}
This last dataset is also a very well known example of image classification. It comprises 60,000 training images and 10,000 testing images of handwritten digits. It has 10 classes, one per each digit (from 0 to 9). The dataset was provided by the \emph{Python} package \emph{Pytorch} and each gray-scale image is a $28 \times 28$ matrix. It sets the last example as the final test for a task that has a reasonable large number of classes (for a single qudit) and a huge data dimension, which is unfeasible to classify by brute force. Using the model described in section \ref{subsec:HDdata} we used 300 images per digit for training and 600 for the testing. Here, since the problem involves a more sophisticated technique, we further split the training set into $N_{training} = 240$ and $N_{validation} = 60$ as discussed in Section \ref{sec:learningwq}. The validation set will be used along the training procedure to compute both the loss function and the accuracy to ensure that no overfitting is taking place.

\subsection{Testing encoding strategies}\label{subsec:EncComp}

We commence by analyzing the various encoding techniques discussed in Section \ref{subsec:encoding} (refer to Table \ref{tab:EncodingsG}) to determine whether any particular technique outperforms the others.
To gain insights, we leverage the Iris dataset. The comparisons are conducted using the explicit method, where the centers are predetermined as described in Section \ref{subsub:explicit}. The chosen performance metric is the test accuracy, which is defined as follows:
\begin{equation}
    \text{Test Accuracy} = \frac{\text{Number of Correct Predictions}}{N_{\text{test}}}
\end{equation}
Figure \ref{fig:ansatzExplicitComparison} illustrates the investigation conducted using various ansatz designs. The upper plot presents a comparison of the test accuracy's impact when employing  non-orthogonal states  for generating superpositions, as opposed to utilizing states from the system's orthonormal basis.
To achieve this, we consider a qubit with the encoding function $g_1$ from Table \ref{tab:EncodingsG} and a single optimization layer. We select three non-orthogonal states, $\{|\varphi_k\rangle\}_{k=0}^{k=2}$, to accommodate the Iris dataset's dimensionality, $D_x = 4$, within a single sublayer. To evaluate the significance of these states,  we start with a non maximally orthogonal set and gradually increase their orthogonality until obtaining the solution provided by MOS.
In the top plot of Figure \ref{fig:ansatzExplicitComparison}, these points correspond to the circles. As anticipated, greater orthogonality among these three states leads to better coverage of the Hilbert space with our pulses, resulting in increased expressibility and test accuracy.

We compare this kind of ansatz with pulsing over the original set of orthonormal states and adding sublayers to accomodate the complete data dimensionality. This result is marked with a black square in the figure and proves to be more efficient for this particular task.

Furthermore, we compare these findings with the performance of a qutrit ($d = 3$) under identical circumstances. The qutrit serves as an ideal unit of information for this problem. It has two  transitions, enabling the embedding of complete data dimensionality within our ansatz, without necessitating sublayers. 
Surprisingly, the performance of the qutrit falls short compared to the performance achieved using a qubit with sublayers. As we will delve into, this is due to the choice of the encoding function, denoted as $g_1$, which, as demonstrated in Section \ref{subsec:encoding}, fails to fully exploit the information provided by each data point.

\begin{figure}[h!]
    \centering
    \includegraphics[width = \columnwidth]{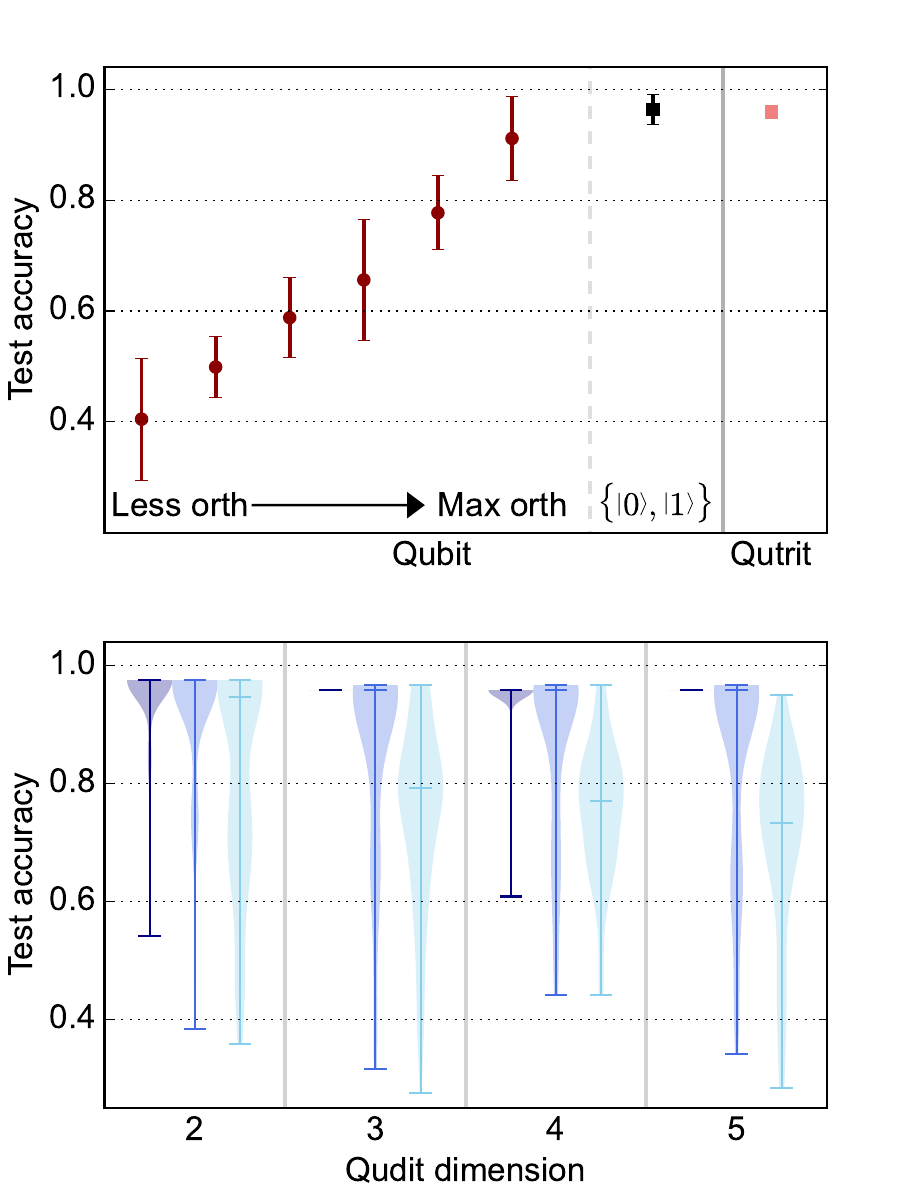}
    \caption{Comparison between ansätze in the explicit framework. Top: Comparison between different ansatz structures (design of the basic pulses). For the red rounded points, a non-orthogonal set of states is used as basis. The black square point represents the results when the original orthonormal basis is used and the last salmon square point marks the result for a qutrit (also with the orthonormal basis). Bottom: Comparison between different embedding functions of the data (how the original data is transformed into the parameters of the pulses). From darker to lighter colors: $g_1$, $g_2$ and $g_3$ from Table \ref{tab:EncodingsG}.
    }
    \label{fig:ansatzExplicitComparison}
\end{figure}

Having determined that the ansatz involving the states of the orthonormal basis of the system (the original definition in Eq. \eqref{eq:GivensRot}) is the most efficient in terms of the type of operations between levels, we now investigate, using this ansatz, which encoding function yields the most beneficial impact on the training process. To accomplish this, we compare the three embeddings, $g_i$ with $i=1,2,3$, presented in Table \ref{tab:EncodingsG}, for different qudit dimensions $d$. Once again, we utilize the Iris dataset, employing only one optimization layer and the explicit method.

The lower plot presents violin plots illustrating the distribution of test accuracies acquired from $100$ distinct and independent initial conditions in the training process. For the darkest color ($g_1$), it is evident that there is negligible statistical variance when $d > 2$. With the sole exception of a single point at $d = 4$, the test accuracy consistently remains at $95.83\%$.
As detailed in Section \ref{subsec:encoding}, this encoding strategy is suboptimal because it overlooks half of the original data, $\vec x$, treating it as relative phases in the generated states. These phases have no influence when projecting onto the reference states of the orthonormal basis. Although it appears to yield good results for this specific dataset, it may not do so in other scenarios where unutilized features play a crucial role.
Hence, it is more appropriate to consider the other two alternatives, namely, $g_2$ and $g_3$ (represented by medium and light colors in the plot, respectively). It is noteworthy that, despite employing a larger number of parameters in the optimization process, the median performance of $g_3$ falls short of that achieved with $g_2$.
Consequently, based on these findings, we deduce that the most suitable option among these alternatives is to encode our data using $g_2: x_i' \mapsto \bar\omega_{ii}\cdot x_i + b_i$, initializing our qudit in the $|+\rangle$ state. Consequently, we adopt this configuration for all subsequent simulations.

\subsection{Quantum metric learning}\label{subsec:MethodComp}

We now proceed to compare the performance of both the explicit and implicit methods discussed in section \ref{subsec:training}.
As concluded at the end of section \ref{subsec:EncComp}, the encoding used for the subsequent simulations with the explicit method is $g_2$ from Table \ref{tab:EncodingsG}.
However, for the implicit method, we adhere to using $g_1$. In other words, we initialize the state to $|0\rangle$. The rationale behind this decision is that, since there are no fixed centers in this case, there is no issue with unused relative phases. Thus, for the sake of simplicity, we find it more convenient to initialize the system in the ground state. It is worth noting that simulations were also conducted by initializing the system in the $|+\rangle$ state, yielding similar results.

\begin{figure}[h!]
    \centering
    \includegraphics[width = \columnwidth]{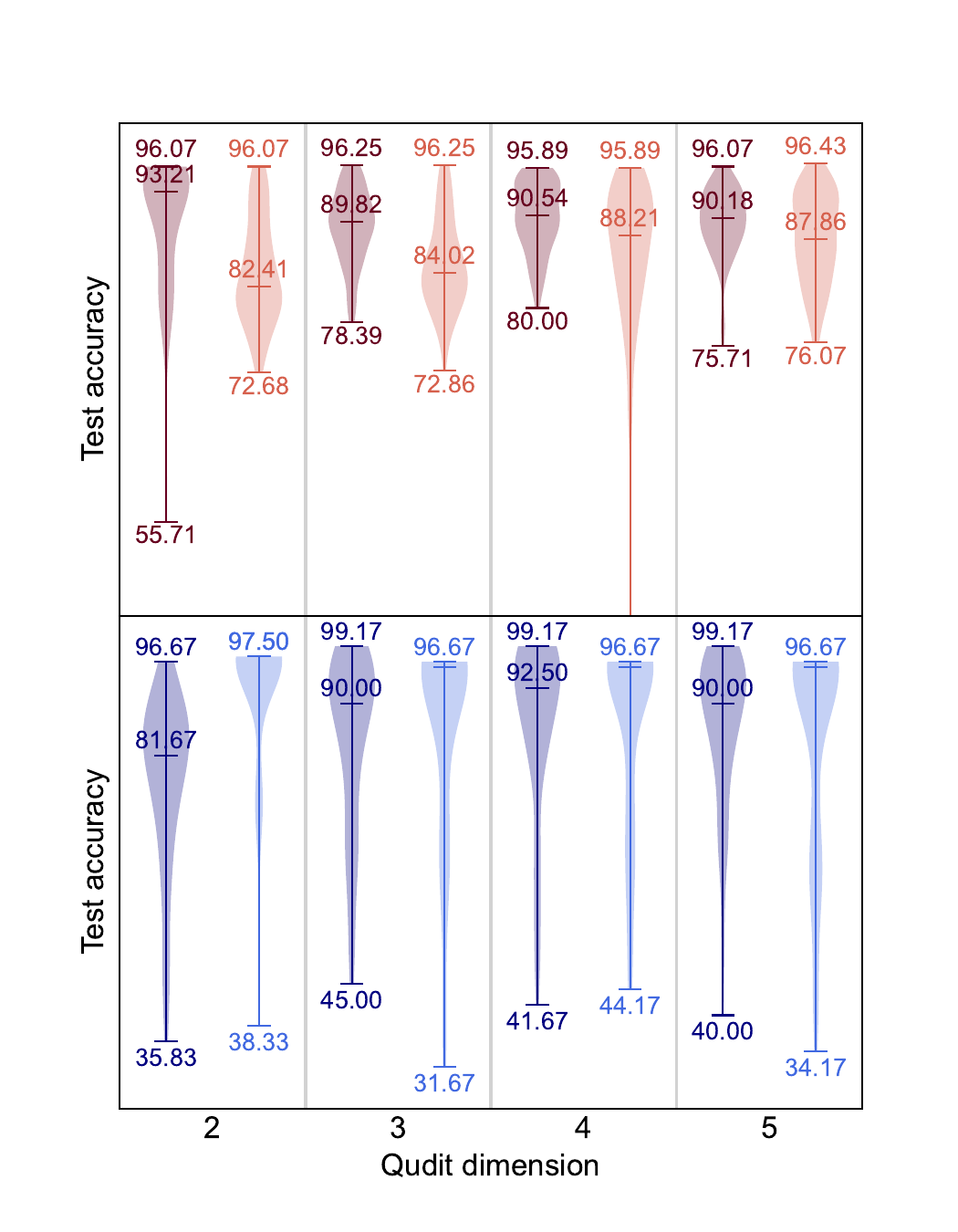}
    \caption{Comparison between frameworks for different datasets. Top: Results for the Breast cancer Wisconsin data set. Bottom: Results for the Iris data set. Darker colors represent the results for the implicit method and lighter colors represent the explicit method.}
    \label{fig:MethodResults}
\end{figure}

Figure \ref{fig:MethodResults} illustrates the results obtained for the Breast Cancer dataset (top plot) and the Iris dataset (bottom plot). The darker-colored violin plot represents the outcomes obtained with the implicit method, while the lighter color represents the explicit method.
Each distribution in the plot is derived from 100 different and independent initial conditions, and a single optimization layer is employed.
It is evident that, although the implicit method generally appears to achieve better results, both methods demonstrate a good performance. In fact, these results are comparable to those achievable with the best classical classifiers for these datasets \cite{ucirepo}. A comparison between classical models and the findings of this study is presented in Figure \ref{fig:modelcomp}.

\begin{figure}[h]
    \centering
    \includegraphics[width = \columnwidth]{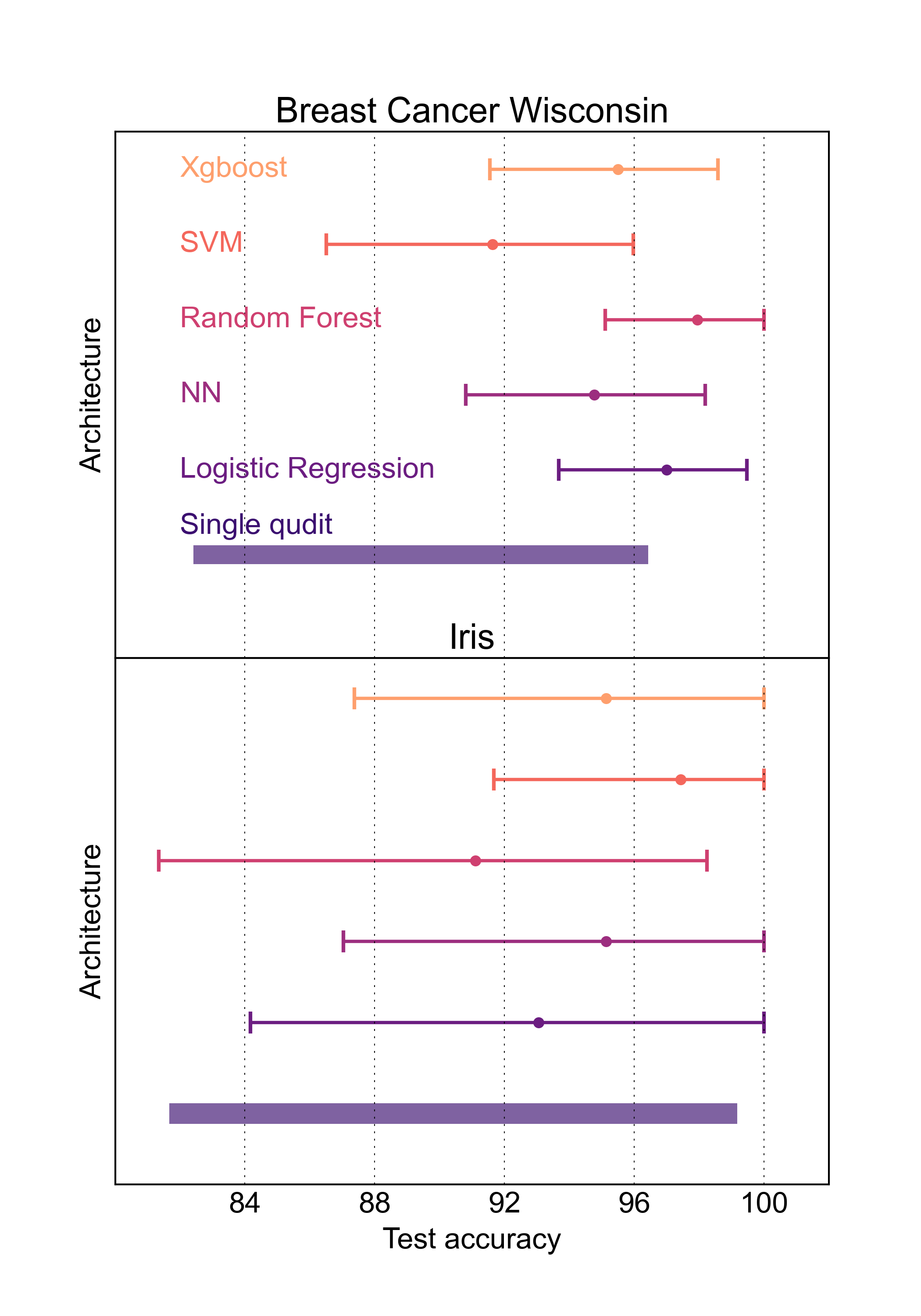}
    \caption{Performance comparison between the best classical models and our best results. The results for the classical models were extracted from \cite{ucirepo} and the bars for the single qudit results cover the distribution of medians and maximum test accuracies obtained for all qudit dimensions in Figure \ref{fig:MethodResults}.}
    \label{fig:modelcomp}
\end{figure}

It is worth emphasizing that the dimensionality of the qudit does not appear to be a critical factor, except in the case of the Iris dataset where $d > 2$ seems to yield slightly better results compared to $d = 2$. This suggests that in datasets where the number of classes does not allow for exploiting the capacity of the MOS ($K \approx d$), the size of the Hilbert space does not play a significant role.

Furthermore, we can understand that the implicit method is capable of finding slightly superior solutions compared to the explicit method because it has the flexibility to discover the centers that best fit the data structure. This characteristic becomes evident when we evaluate the clustering achieved by the implicit method through the minimization of Eq. \eqref{eq:loss_implicit} \cite{implicitMOS}. Figure \ref{fig:EmbeddingComparison} provides a comparison of the mapping results obtained by both methods using the test set of the Iris dataset, considering a single qubit and a single layer. The reference states are indicated by arrows, and the color plot represents the overlap between them.
While the states obtained using the explicit method are perfectly homogeneous, we observe that the states generated by the implicit method, although very similar, are not completely homogeneous. This, along with the difference in the definition of each loss function, is reflected in the distribution of points on the Bloch sphere. In the explicit case, the points are almost perfectly aligned on the circumference connecting the three reference states, whereas in the implicit case, there is a greater dispersion on the surface.

\begin{figure}[h!]
    \centering
    \includegraphics[width = \columnwidth]{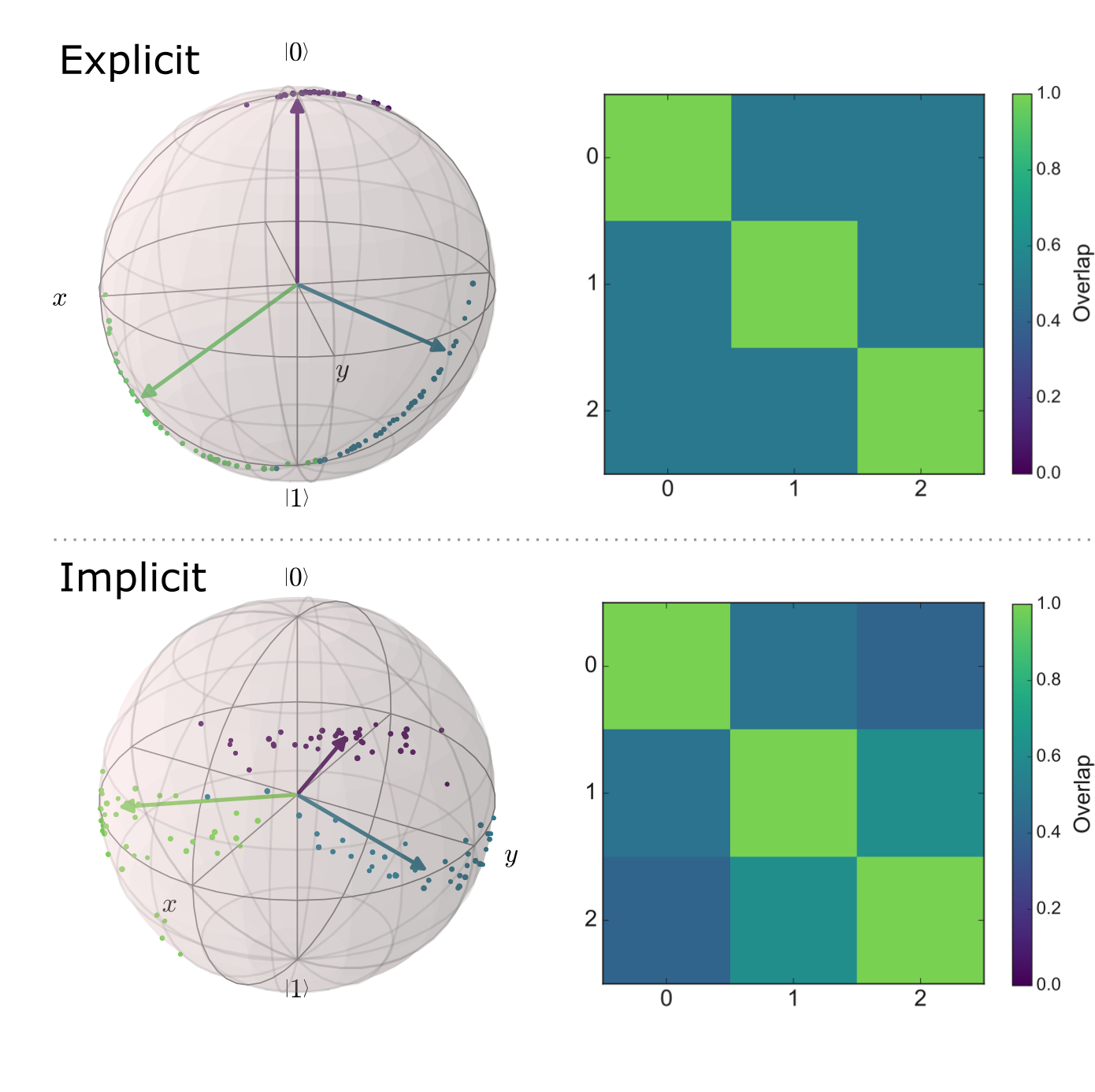}
    \caption{Different mappings produced by each method in the Iris data set. On the left side of the figure we represent each wavefunction produced by the circuit after training. The color of each point represents the class at which it belongs. Same for the arrows (reference states). On the right side we plot the overlap between these reference states. Although not perfectly, the implicit method manages to find states that are practically homogeneous and maximally orthogonal in this case.}
    \label{fig:EmbeddingComparison}
\end{figure}

\subsection{High-dimensional data}\label{subsec:HDdata}

\begin{figure*}[t!]
    \centering
    \includegraphics[width = 0.85\textwidth]{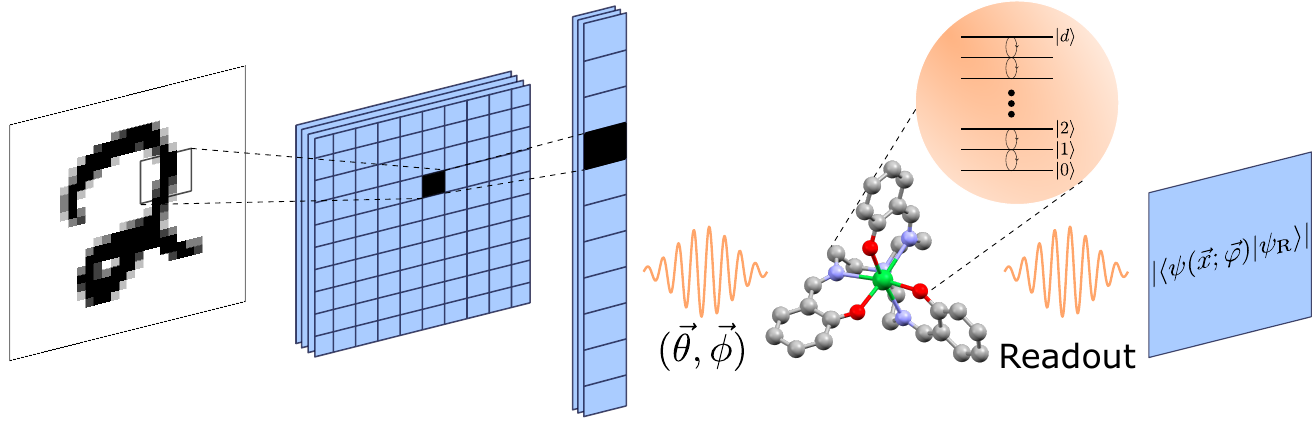}
    \caption{Hybrid Convolutional Neural Network (HCNN). In this model, we propose the use of Convolutional Neural Networks to act as a dimension reducer technique that preserves the spatial correlations present in an image. We exemplify the $d$-level system with its possible implementation in a molecular qudit \cite{chiesa2023blueprint}.}
    \label{fig:hybridcnn}
\end{figure*}

As a final case study, we selected the MNIST dataset to demonstrate the challenge of dealing with high-dimensional data compared to the dimension of the qudit. Applying the data re-uploading technique used in previous cases is impracticable due to the dataset's large dimensionality, $D_x \gg d$.

To address this issue, we need to employ a dimensionality reduction technique that is compatible with the data re-uploading process of our qudit. One commonly used technique is Principal Component Analysis (PCA) \cite{pearson1901liii}. PCA transforms the original data into a new set of uncorrelated variables with lower dimensionality.
These new variables, known as components, capture the directions with the highest variance in the data. The first few components are more relevant than the latter ones, aiming to compress the relevant information from the original dataset.
A challenge with applying PCA in this case is that the images in the dataset need to be ``flattened'' from a two-dimensional format into a one-dimensional vector. This flattening process eliminates the spatial correlations that are crucial for distinguishing between different digits (e.g., distinguishing a 6 from an 8).
In Appendix \ref{app:PCA}, we present a study demonstrating that even with PCA, it is necessary to retain a dimension comparable to that of the original problem to avoid losing valuable information for classification purposes.

Given this challenge, we turn our attention to another technique that has shown remarkable effectiveness in image classification and processing: convolutional neural networks (CNNs) \cite{schmidhuber2015deep}. CNNs consist of two-dimensional processing layers designed to capture spatial correlations present in images. These initial layers extract information from arbitrary matrix blocks rather than individual pixels. This information is then processed by linear layers that assign probabilities to each class. In classical machine learning, a final layer with 10 neurons provides the probabilities for assigning each digit to the image. Our proposed model follows the philosophy of hybrid variational algorithms, combining the optimization capacity of classical computers with the state synthesis capacity of quantum computers. Thus, we design a hybrid convolutional network (HCNN), illustrated in Fig. \ref{fig:hybridcnn}, where the first part consists of classical convolutional and linear layers that preprocess the data and reduce its dimensionality. The processed data is then fed into the quantum layer, which acts as our qudit classifier.

For the design of this hybrid network, we utilized the Python library specialized in neural networks, PyTorch \cite{paszke2017automatic}. Our quantum layer is a customized layer integrated into the global optimization process, and the output dimension of the last classical layer is set to $2(d-1)$. Thus, the CNN compresses the original data to variables of dimension $2(d-1)$, accommodating the qudit dimension, namely
$g: \mathbb{R}^{D_x} \mapsto \mathbb{R}^{2(d-1)}$. We remark that the last layer in the CNN implements the transformation $g_3$ in Table \ref{tab:EncodingsG}. 
We strive to keep the structure of the CNN as simple as possible to avoid the classical part overshadowing the quantum part in the classification process. Consequently, we choose a CNN with two convolutional layers. The first layer has one input channel (as the images are grayscale) and 10 output channels, while the second layer has 10 input channels and 20 output channels, with a kernel size of $(5,5)$. Dropout is applied to mitigate overfitting, and the network concludes with two linear layers: the first with an input dimension of 320 and an output dimension of 50, and the final layer feeds into the quantum layer.

\begin{figure*}[t!]
    \centering
    \includegraphics[width =0.85\textwidth]{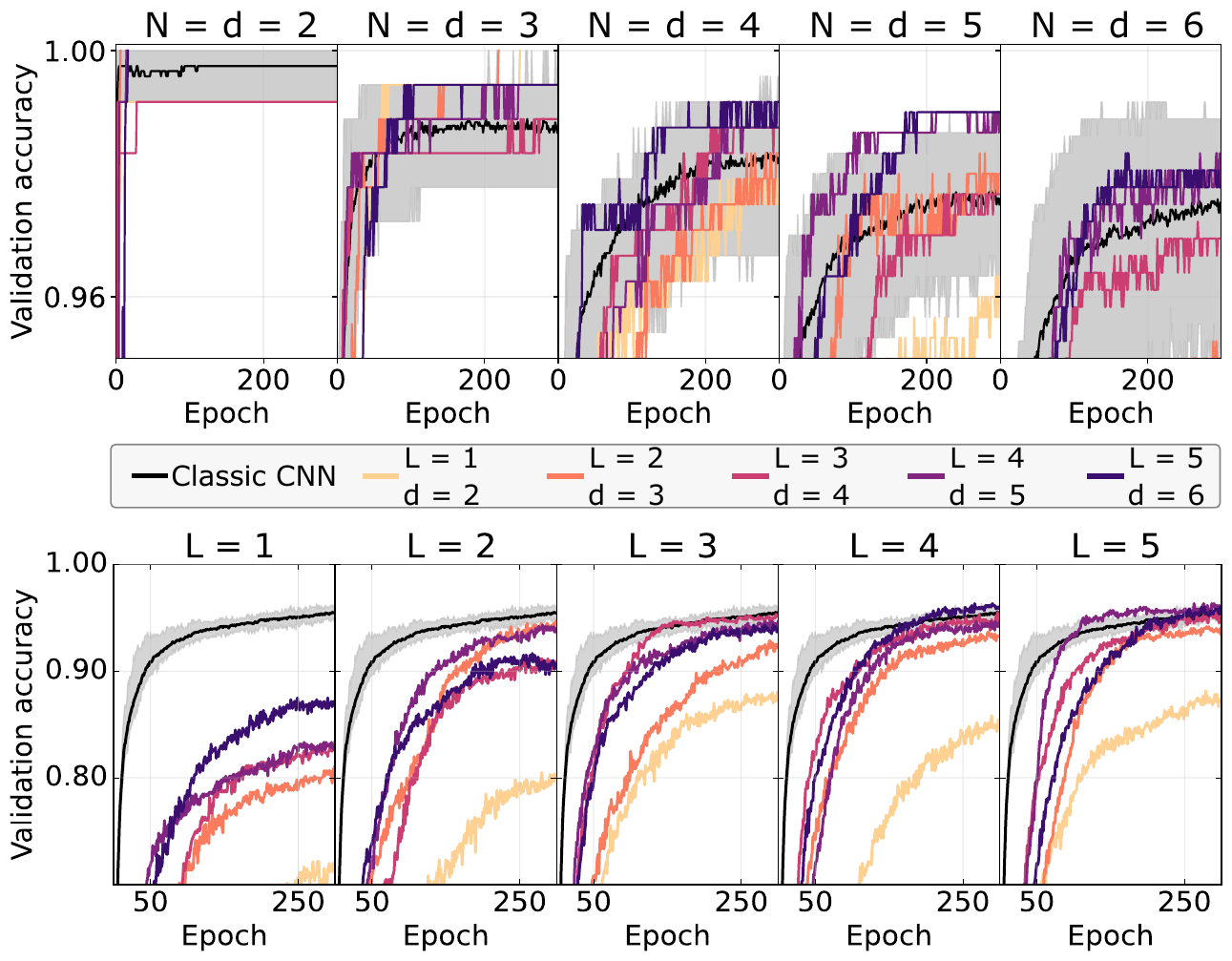}
    \caption{Results obtained for the HCNN model. Top: Results for the partial uses of the data set. We use the same number of digits as levels available in the qudit. Bottom: Results for the full data set (10 digits). Line colors represent the number of layers used (top image) and the number of levels in the qudit (bottom image). In both cases we plot the validation accuracy as it is the one of which we keep track along all the epochs.}
    \label{fig:HCNNresults}
\end{figure*}

Figure \ref{fig:HCNNresults} presents the results of our study. The top plot explores the behavior of the hybrid network when the number of digits to be classified aligns with the number of levels of the qudit, denoted as $d$, as a function of the number of layers, $L$, in the ansatz. The reference states correspond to the orthonormal basis states of the system.

The lower plot illustrates the performance of our classifier when confronted with the complete dataset of 10 digits, as a function of both the qudit dimension and the number of layers in the ansatz. In all scenarios, the hybrid CNN optimizes both its classical and quantum layers simultaneously. This allows us to assess the impact of the quantum layer on the overall classification process. For all these simulations, the explicit method with fixed centers was employed.

The black line, in all scenarios of Fig. \ref{fig:HCNNresults}, represents the accuracy obtained using the exact same classical preprocessing CNN utilized in conjunction with the quantum layer in the hybrid model. However, in this case, the quantum layer is replaced by a linear classical layer with $N$ outputs, where $N$ is the number of digits being classified in each scenario.
The loss function employed is the Cross Entropy function available in the PyTorch package. Additionally, the shaded grey region covering the black line indicates the dispersion in results obtained from 10 different instances of the algorithm, each with different initial conditions for the parameters.
The shaded area represents the maximum and minimum validation accuracy values obtained during the training process for the validation set, while the solid line represents the mean value.
Unfortunately, replicating these statistics for the hybrid model was not feasible due to the computationally intensive training process it requires.

The study reveals that when $N = d$, the hybrid model achieves $100 \%$ accuracy for 2 and 3 digits, surpassing the mean accuracy of the classical network for a sufficient number of layers in all cases. However, when considering the full dataset, the quantum layer appears to act as a bottleneck, resulting in a lower accuracy compared to the classical counterpart. Nevertheless, as more non-linearity (i.e. layers) is incorporated into the quantum ansatz, the hybrid model progressively approaches the performance of the classical results. In Figure \ref{fig:digitsembedding} we further visualise how adding layers (non-linearity) favours the separation between clusters of points on the Bloch sphere for the case of a qubit.

To conclude, table \ref{tab:testaccsMNIST} shows the results of the test accuracy obtained for $N_{test} = 600$ images/digit. We see that although in Figure \ref{fig:HCNNresults} it seemed that the qudit was achieving better results than the classical network alone, in the test data set fails to surpass it for the number of levels and layers that we were able to explore. 

\begin{table}[h!]
\centering
\resizebox{0.75\columnwidth}{!}{%
\begin{tabular}{|cccccc|}
\hline
\multicolumn{6}{|c|}{Test accuracy (\%)}                                                                                                                 \\ \hline
\multicolumn{1}{|c|}{d} & \multicolumn{1}{c|}{L = 1} & \multicolumn{1}{c|}{L = 2} & \multicolumn{1}{c|}{L = 3} & \multicolumn{1}{c|}{L = 4} & L = 5 \\ \hline
\multicolumn{1}{|c|}{2} & \multicolumn{1}{c|}{68.8}  & \multicolumn{1}{c|}{76.8}  & \multicolumn{1}{c|}{84.5}  & \multicolumn{1}{c|}{83.1}  & 85.3  \\ \hline
\multicolumn{1}{|c|}{3} & \multicolumn{1}{c|}{78.1}  & \multicolumn{1}{c|}{91.9}  & \multicolumn{1}{c|}{90.3}  & \multicolumn{1}{c|}{90.6}  & 91.2  \\ \hline
\multicolumn{1}{|c|}{4} & \multicolumn{1}{c|}{80.2}  & \multicolumn{1}{c|}{89.3}  & \multicolumn{1}{c|}{92.1}  & \multicolumn{1}{c|}{92.5}  & 92.8  \\ \hline
\multicolumn{1}{|c|}{5} & \multicolumn{1}{c|}{83.3}  & \multicolumn{1}{c|}{93.6}  & \multicolumn{1}{c|}{94.0}  & \multicolumn{1}{c|}{94.7}  & 94.5  \\ \hline
\multicolumn{1}{|c|}{6} & \multicolumn{1}{c|}{82.7}  & \multicolumn{1}{c|}{89.9}  & \multicolumn{1}{c|}{93.7}  & \multicolumn{1}{c|}{94.1}  & 93.1  \\ \hline
\multicolumn{1}{|c|}{Classic} & \multicolumn{5}{|c|}{$95.26 \pm 0.14$}   \\ \hline
\end{tabular}%
}
\caption{Test accuracy obtained from the best models. Each best model is defined as the one with the highest validation accuracy throughout the training.}
\label{tab:testaccsMNIST}
\end{table}

\begin{figure}[h!]
    \centering
    \includegraphics[width = \columnwidth]{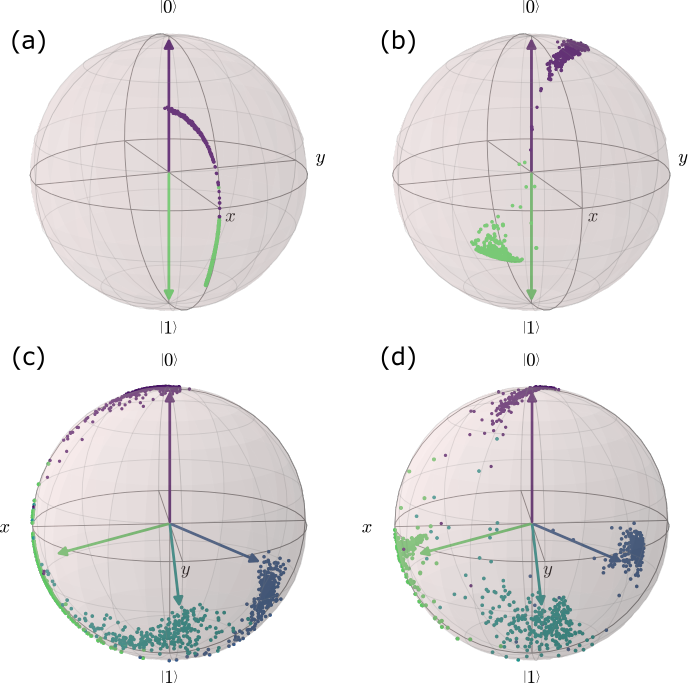}
    \caption{Mapping for the MNIST data set with 2 digits (plots (a) and (b)) and with 4 digits (plots (c) and (d)) using the explicit method. Left side plots are the mappings obtained for $L = 1$ and right side plots the ones obtained for $L = 5$.}
    \label{fig:digitsembedding}
\end{figure}

\subsection{Computational complexity}\label{subsec:hardware_eff}

Throughout different examples we have been discussing the differences in terms of performance between the explicit and implicit methods introduced in section \ref{subsec:training}. Now, we want to discuss about the challenge that each method imposes in a possible experimental implementation.

The implicit method involves a higher training load. The absence of fixed centres means a larger number of terms to evaluate, essentially the cross terms of the type ${\rm Tr}(\rho_i\rho_j)$. These can be evaluated with different techniques like the SWAP test \cite{nielsen2002quantum} or the inversion test \cite{lloyd2020quantum, nghiem2021unified}. In any case,  this is an overhead in terms of processing compared to the explicit case.

Another advantage of the explicit approach is that  no full state tomography is needed in order to evaluate the loss function. Given that our ansatz is universal, i.e. that it is able to generate any wavefunction \cite{castro2022optimal, chiesa2023blueprint}, and that we know the exact coefficients of each reference state, we can relate both to design a unitary operation that implements the desired reference state. Let's take a general reference state 
\begin{equation}
|\psi_R\rangle = \sum_{j = 0}^{d-1} c_j e^{i\beta_j} |j\rangle
\end{equation}
with $c_j, \beta_j \in \mathbb{R}$ and $\beta_0 = 0$. Then, the unitary operation that transforms the ground state $|0\rangle$ into $|\psi_R\rangle$ is
\begin{equation}\label{eq:refUnit}
    |\psi_R \rangle = U_R |0 \rangle= \prod_{k = 0}^{d - 2} \hat R_{k, k+1}(\theta_k, \phi_k)
    |0 \rangle
\end{equation}
with $\phi_k = -\beta_{k+1} - (k+1)\pi/2 - \sum_{l < k}\phi_l$ and $\theta_k = 2\arccos{(c_k/P_k)}$, being $P_k$ a factor which is equal to $1$ for $k= 0$ and $P_k = \prod_{l<k}\sin{(\theta_l/2)}$ otherwise.
To compute the loss function as defined in Eq.\eqref{eq:loss_explicit}, we could just apply $U_R^\dagger$ after the ansatz operation, $U(\vec x; \vec \varphi)$, and just by measuring the population of the ground state we will get the fidelity between the reference and the produced state: 
\begin{equation}\label{eq:GSfid}
|\langle\psi_R|\psi(\vec x; \vec \varphi)\rangle| = |\langle 0|U_R^\dagger U(\vec x; \vec \varphi)|0\rangle| \; .
\end{equation}
This can be related to Eq.\eqref{eq:loss_explicit} through Eq.\eqref{eq:datapoint_ensemble} by noting that $tr\left(\rho_k|\psi_k^R\rangle\langle\psi_k^R|\right) = \frac{1}{N_k}\sum_{i_k}|\langle\psi(\vec x_{i_k}; \vec \varphi)|\psi_k^R\rangle|^2$, so we can reformulate $\mathcal{L}_{\rm E}$ as in Eq.\eqref{eq:loss_explicit_sim}.
\begin{equation}\label{eq:loss_explicit_sim}
    \tilde{\mathcal{L}}_{\rm E} = 1 - \frac{1}{K}\sum_k\frac{1}{N_k}\sum_{i_k}|\langle 0|U_R^\dagger U(\vec x; \vec \varphi)|0\rangle|^2
\end{equation}

Just by measuring the population of the state $|0\rangle$ (for which $N_{shots}$ will be needed depending on the experimental platform and its read-out noise) for each data point, we can obtain the value of the loss function in $N_{shots}\cdot N$ measurements.
Therefore, in the training phase, we will need to run $N_{shots}\cdot N_{training}\cdot N_{eval}$ times the circuit, where $N_{eval}$ is the number of evaluations that the circuit needs to be run in turn by the classical optimizer to achieve convergence.
Typically, $N_{eval} = N_{epochs}\cdot N_{ge}$ where $N_{epochs}$ stands for the number of epochs that the optimizer searches for the solution and $N_{ge}$ is the number of evaluations per epoch that the classical algorithm needs to estimate the gradient.
Finally, in the test phase, we will need to run the circuit up to $N_{shots}\cdot N_{test}\cdot K$ times, as each test point needs to be compared with each of the reference states in order to assign it the most probable class.


\section{Dissipation and decoherence}\label{sec:Dissipation}

Finally, we investigate how our model behaves in the presence of both decay and decoherence.

To achieve this, we formulate a Lindblad-like master equation that realistically models a quantum system. The dynamics is given by \cite{breuer2002theory, Rivas2012},
\begin{equation}\label{eq:mastereq}
    \dot\rho = -i[\mathcal{\tilde{\mathcal{H}}}, \rho] + \sum_{i = 1}^2\mathcal{D}_{T_i}\ .
\end{equation}
The first term on the right-hand side represents the unitary part and it reads,
\begin{equation}
\label{Hinte}
\tilde{\mathcal{H}} = \frac{\Omega_R}{2} \left( e^{i\phi}|j\rangle\langle j+1| + \text{h.c.} \right) \, .    
\end{equation}
Without noise, this term induces the unitary evolution described by Eq.\eqref{eq:GivensRot}.
Physically, this Hamiltonian arises from a driven system
$
\mathcal{H} = \sum_j\epsilon_j|j\rangle\langle j| + \Omega_R\cos(\omega_d t + \phi)\sum_{k,l}(|k\rangle\langle l| + \text{h.c.})\delta_{k, l+1}
$.
Here, $\epsilon_j$ represents the energy of the eigenstate $|j\rangle$, $\Omega_R$ is the strength of the perturbation (Rabi frequency), and the $\delta_{k, l+1}$ term signifies coupling between neighboring levels in the qudit.
Moving to the interaction picture and employing the Rotating Wave Approximation (RWA) when the driving frequency $\omega_d= \omega_{j, j+1} \equiv \epsilon_{j+1} - \epsilon_j$, we arrive at the Hamiltonian \eqref{Hinte}.
The driving frequency, $\omega_d$, determines the addressed transition. The duration of the pulse, $t_p$, corresponds to the rotation angle $\theta$ via the relation $\Omega_R\cdot t_p = \theta$, and the perturbation phase, $\phi$, perfectly matches the phase in the $G(\phi)$ term (see Eq. \eqref{eq:givenspref}).
\\
The second term on the right-hand side of the master equation \eqref{eq:mastereq} accounts for dissipation and decoherence effects. Specifically, the dissipation operators $\mathcal{D}_{T_i}$ are defined as $\mathcal{D}_{T_i} = \frac{1}{T_i}\left[2\mathcal{O}_i\rho\mathcal{O}^\dagger_i - \{\mathcal{O}^\dagger_i\mathcal{O}_i, \rho\}\right]$, where $\mathcal{O}_1 = \sum_j |j\rangle\langle j+1|$ represents decay, and $\mathcal{O}_2 = \sum_j j|j\rangle\langle j|$ represents dephasing.

\begin{algorithm}[H]
  \caption{ Noisy simulations of a single qudit classifier. }
  \label{alg:1}
   \begin{algorithmic}
   \State\textbf{\# Training phase}
   \State Initialize $\vec \varphi$: $\vec \varphi_0$
   \For {$n$ in $N_{eval}$}
       \For {$(\vec{x_i}, y_i)$ in $N_{training}$}
       \State $\rho_0 \gets |+\rangle\langle+|$
       \State $(\vec{t_{p, i}}, \vec \phi_i) \gets\vec{x_i}' \gets g_2(\vec x_i, \vec \varphi_n)$
       \State Define $U(\vec x_i, \vec \varphi_n)$ from $(\vec{t_{p, i}}, \vec \phi_i)$
       \State Define $U_R^\dagger$ from $|\psi^R_{y_i}\rangle$ \Comment{Eq. \eqref{eq:refUnit}}
       \State Solve $\dot\rho_i(t)$ \Comment{Eq.\eqref{eq:mastereq}}
       \State Compute $\mathcal{F}(\rho, |0\rangle)$ \Comment{Eq. \eqref{eq:GSfid}}
       \State Compute $\tilde{\mathcal{L}}_{\rm E}$ \Comment{Eq.\eqref{eq:loss_explicit_sim}}
       \EndFor
   \State Update $\vec \varphi$: $\vec \varphi_{n+1} \gets \vec\varphi_n$
   \EndFor

    \State Store the best parameters $\vec\varphi_{opt}$
    
   \State\textbf{\# Testing phase}
    \For {$\vec{x_i}$ in $N_{test}$}
       \For {$k$ in $K$}
       \State $\rho_0 \gets |+\rangle\langle+|$
       \State $(\vec{t_{p, i}}, \vec \phi_i) \gets\vec{x_i}' \gets g_2(\vec x_i, \vec \varphi_{opt})$
       \State Define $U(\vec x_i, \vec \varphi_{opt})$ from $(\vec{t_{p, i}}, \vec \phi_i)$
       \State Define $U_R^\dagger$ from $|\psi_k^R\rangle$ \Comment{Eq. \eqref{eq:refUnit}}
       \State Solve $\dot\rho_i(t)$ \Comment{Eq.\eqref{eq:mastereq}}
       \State Compute and store $\mathcal{F}(\rho, |0\rangle)$ \Comment{Eq. \eqref{eq:GSfid}}
       \EndFor
   \State Assign class to the point according to $max\{\mathcal{F}(\rho, |\psi_k^R\rangle)\}$
   \EndFor

   \end{algorithmic}
\end{algorithm}

Each operation of our ansatz corresponds to the dynamics governed by Eq. \eqref{eq:mastereq}.
The parameters to be optimized are encoded in the time length of the evolution, $t_p$, as well as in the driving phase, $\phi$. These, in turn, are functions of the original data points, $\vec x$, which are re-scaled according to the functions described in Section \ref{subsec:encoding}.
Once the state of the system has been generated, $\rho(\vec x; \vec \varphi)$, we perform the measurement scheme described in Section \ref{subsec:hardware_eff} so that, in the end, we compute the fidelity between the final state $\rho(t)$ and the ground state, $\mathcal{F}(\rho(t), |0\rangle)$. That is, for each point belonging to $N_{training}$, once the sequence of pulses corresponding to the ansatz, $U(\vec x, \vec \varphi)$, has been performed, we apply the sequence corresponding to the unitary that implements the reference state assigned to that point, $U_R^\dagger$. By doing so, the result from measuring the population of the $|0\rangle$ gives us the value for computing $\tilde{\mathcal{L}}_{\rm E}$ in Eq. \eqref{eq:loss_explicit_sim}.
In the training phase we do this only once per data point, since there is only one reference state to compute the fidelity with. For the testing phase, we measure the overlap between the generated state and all the reference states instead, to be able to decide with which one has the most overlap.
Both the fidelity estimation and the preparation of the initial state are assumed to be ideal. That is, the fidelity is computed directly from the final density matrix (we do not perform neither quantum state tomography nor a stochastic measurement procedure) and the initial state is defined to be a pure state, $\rho(t = t_0) = |+\rangle\langle +|$.
An scheme of this procedure is sketched in the pseudo-code of Algorithm \ref{alg:1}.

To evaluate the performance of the aforementioned model, we employ the Iris dataset.
Figure \ref{fig:noisesims} depicts the results for both a qubit and a qutrit as a function of the system's decoherence time, denoted as $T_2$, which constitutes the main limitation in practical implementations.  We choose parameter values in accordance with typical experimental settings \cite{cao2023emulating, ringbauer2022universal, low2023control, rollano2022high, fischer2022towards}: $\omega_{ij}/2\pi \sim 3$ GHz, $\Omega_R/2\pi = 10$ MHz, $T_1 = 100$ ms, and $T_2 \in$ [100 ns, 100 $\mu$s].
Furthermore, we explore the impact of adding multiple layers to the model and compare the medians of the resulting test accuracies.
For classical optimization, we employ the SPSA optimizer \cite{spall1998overview}, which utilizes only two circuit evaluations per iteration ($N_{ge} = 2$), regardless of the dimensionality of the parameter space being optimized. The maximum number of iterations is set at 30 ($N_{epochs} = 30$, yielding $N_{eval} = 60$). We conduct 50 distinct ``experiments'' (runs) to gather statistics. The initial conditions for each ``experiment'' are derived from the best parameters obtained in the preceding run if there was an enhancement in test accuracy compared to its predecessor; otherwise, they are randomly reinitialized.

\begin{table}[h!]
\centering
\resizebox{0.75\columnwidth}{!}{
\begin{tabular}{|c|c|c|c|}
\hline
System               & Layers & Max Test Accuracy & $T_2$      \\ \hline
\multirow{2}{*}{Qubit}  & L = 1  & 0.983          & 1.5 $\mu$s \\ \cline{2-4} 
                        & L = 2  & 0.992          & 400 ns     \\ \hline
\multirow{2}{*}{Qutrit} & L = 1  & 0.975          & 100 $\mu$s \\ \cline{2-4} 
                        & L = 2  & 0.992          & 6 $\mu$s   \\ \hline
\end{tabular}%
}
\caption{Test accuracies for the Iris dataset using the noisy model. The maximum test accuracy values shown correspond to the minimum $T_2$ value for each respective row.}
\label{tab:noisyresults}
\end{table}

The analysis reveals that by extending the decoherence time, the results quickly converge to those attained in ideal simulations. Notably, with $T_2$ values of 400 ns for the qubit and 1.5 $\mu$s for the qutrit, we achieve maximum accuracies of 0.975 and 0.941, respectively.
Additional numerical values are tabulated in Table \ref{tab:noisyresults}.
Hence, \emph{ the performance of the qudit remains robust even when subjected to realistic noise levels.
}

We can understand the reason behind the  failure  of the model when the noise is increased ($\Omega_R\cdot T_2 \approx 1$). In this scenario, the noise terms of the Quantum Master Equation, the second term in the right hand side of \eqref{eq:mastereq},  become predominant. These dissipators drive $\varrho$ towards its steady state, which, in our case, is $|0\rangle$. As a result, the test accuracy remains around $1/3$. Moving to longer $T_2$ values, the unitary driven part becomes capable of reaching the other reference states (prior to their relaxation), thereby enhancing the accuracy.

\begin{figure}[h!]
    \centering
    \includegraphics[width = \columnwidth]{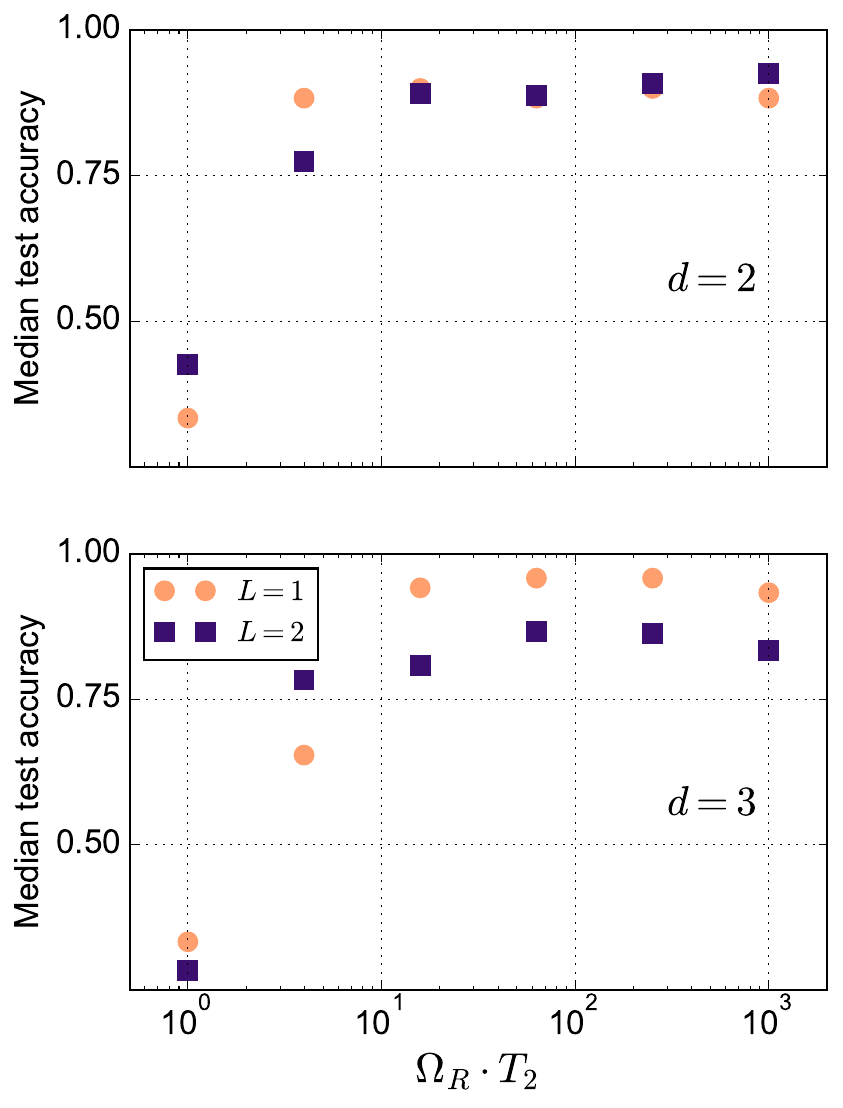}
    \caption{Median of the distribution of test accuracies obtained with the noisy model for different qudit dimensions and number of layers. Top: Qubit results. Adding more layers helps to increase slightly the median value except for some points. Bottom: Qutrit results. Here adding more layers does not provide any benefit in general (obstructs optimizer convergence) and we can see that the median values are higher than in the qubit case.}
    \label{fig:noisesims}
\end{figure}

\section{Conclusions}\label{sec:conclusions}

Challenging the performance of quantum processors in machine learning is an intriguing and timely pursuit. Simple systems not only lend themselves to easier experimental implementation but also facilitate a more comprehensive theoretical understanding of their performance. It is with this motivation that we embarked on our investigation in this paper.
In particular, we focus on a quantum system with $d$ levels, in which transitions between these levels can be triggered through external fields. By optimizing these transitions, we demonstrate that qudits can effectively learn to classify realistic datasets.

We explore both implicit and explicit metric learning paradigms, along with various encoding strategies. Our comprehensive study leads us to the following conclusions: 

\begin{itemize}
  \item Various problems can derive advantages from distinct encoding strategies. This observation aligns with our discussion in section \ref{subsec:EncComp}. In Table \ref{tab:EncodingsG}, we presented a range of functions, and it became evident that the explicit method benefits from the $g_2$ encoding, whereas the implicit method finds better compatibility with the $g_1$ encoding. Moreover, when considering the hybrid model introduced in section \ref{subsec:HDdata}, it becomes apparent that the $g_3$ encoding aligns more effectively with the overall structure of the model.

  \item  Within the metric learning framework, we have explored both explicit and implicit methods in depth, substantiating our exploration with geometric rationale that interlinks the two approaches. Notably, both methodologies exhibit remarkable performance, even rivaling some of the most adept classical algorithms. It is worth noting, however, that of the two, the explicit method emerges as the one most amenable to immediate experimental implementations.

  \item These statements find validation through simulations that incorporate prevalent noise sources found in today's physical devices. Being able to generate any set of MOS thanks to a genetic algorithm developed for this purpose \cite{GApreprint}, we  outline in section \ref{sec:Dissipation} a step-by-step algorithm that has demonstrated robust adaptability, and we are confident in its suitability for diverse experimental implementations that align with the discussed requisites.

  \item Concerning the efficacy of introducing additional levels to the physical system, we observe that while augmenting dimensionality notably enhances performance in certain instances, as illustrated in Figure \ref{fig:HCNNresults}, there are situations, depicted in Figures \ref{fig:ansatzExplicitComparison}, \ref{fig:MethodResults} and \ref{fig:noisesims}, where the impact is less evident. In the latter (Fig. \ref{fig:noisesims} and Sec. \ref{sec:Dissipation}), although the qutrit appears to outperform the qubit when $L = 1$, there appears to be a contradiction in Table \ref{tab:noisyresults}, where the same or better maxima test accuracies are obtained with the qubit for smaller values of $T_2$. Consequently, determining the optimal dimensionality for each specific problem remains an open question, contingent upon the characteristics of the physical system (decoherence rates, speed of operations, etc.) and the nature of the problem itself. Thus, the dimensionality of the quantum system can be viewed as a hyperparameter that is worth tuning and investigating to optimize the learning process.

  \item To conclude, it is worth highlighting that the studies conducted encompass real-world datasets, underscoring the adaptability of the tools discussed here. When confronted with datasets of substantial size like the MNIST, surpassing the capacity of the physical unit, we devised a hybrid algorithm that yields satisfactory performance. Nevertheless, it remains evident that physical units with limited levels such as the ones that we have access to simulate serve as bottlenecks, unable to surpass the efficacy of standard classical methods.
  
\end{itemize}

\section*{Acknowledgements}

The authors thank Adri\'an P\'erez-Salinas for his helpful comments and insights during the preparation of this manuscript. The authors acknowledge  funding from the EU (QUANTERA SUMO and FET-OPEN Grant 862893 FATMOLS), the Spanish Government Grants PID2020-115221GB-C41/AEI/10.13039/501100011033 and TED2021-131447B-C21 funded by MCIN/AEI/10.13039/501100011033 and the EU ``NextGenerationEU''/PRTR, the Gobierno de Arag\'on (Grant E09-17R Q-MAD) and the CSIC Quantum Technologies Platform PTI-001.
This work has been financially supported by the Ministry of Economic Affairs and Digital Transformation of the Spanish Government through the QUANTUM ENIA project call - Quantum Spain project, and by the European Union through the Recovery, Transformation and Resilience Plan - NextGenerationEU within the framework of the ``Digital Spain 2026 Agenda''. J. R-R. acknowledges support from the Ministry of Universities of the Spanish Government through the grant FPU2020-07231. S. R-J.  acknowledges financial support
from Gobierno de Arag\'on through a doctoral fellowship.

\bibliographystyle{unsrtnat}
\bibliography{refs}
\appendix

\section{Genetic algorithms for Maximally Orthogonal States}\label{app:MOS}

\begin{figure*}[t!]
    \centering
    \includegraphics[width = \textwidth]{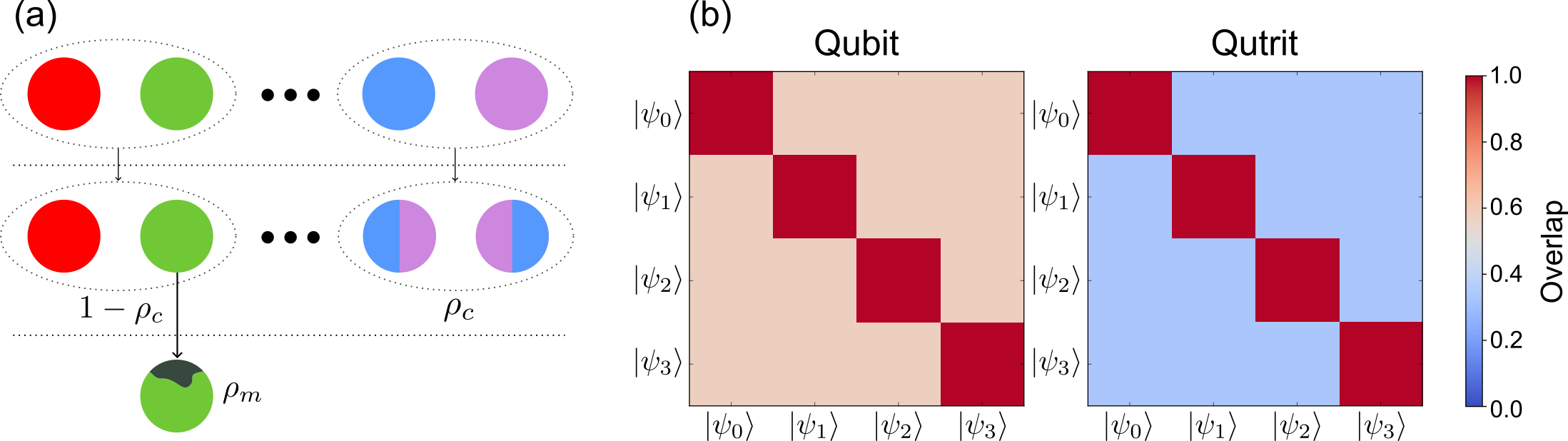}
    \caption{Genetic algorithm to determine maximally orthogonal states whatever the dimension of the qudit might be. (a) Schematic of a generation of the algorithm. In this representation we show the probability of a crossover occurring as $\rho_c$ and the probability of a mutation occurring as $\rho_m$. (b) Example of the determination of 4 maximally orthogonal states in two systems. The colour indicates the intensity of the overlap between each pair of states. As expected, in the qutrit ($d = 3$), having a higher dimension than the qubit, it is easier to accommodate the same number of states while maintaining a smaller overlap.}
    \label{fig:gasketch}
\end{figure*}

As introduced in section \ref{subsec:GeometricInterpretation}, we define the set of maximally orthogonal states (MOS) as the one that optimises a certain function [Cf. Definition \ref{def:mos} in the main text]. The numerical process used is based on bio-inspired algorithms. Specifically, genetic algorithms. The technical details of defining and finding maximally orthogonal states and the algorithm developed for this purpose will be discussed in another publication \cite{GApreprint}. Here we will give a more general overview of the procedure.

A genetic algorithm embodies a versatile optimization approach inspired by the principles of Darwinian natural selection. It involves an evolving population of potential solutions to an optimization problem over successive generations. During each generation, the most adept individuals within the population are chosen. Through processes of mating, mutation, and survival, they give rise to the subsequent generation. This cycle, sketched in Fig. \ref{fig:gasketch} (a), is repeated until a point of convergence is achieved.
The mating process aims to probabilistically amalgamate favorable traits from one individual with those of another, resulting in an overall improved individual. Mutation, on the other hand, introduces randomness, permitting an exploratory journey through the solution space. Survival introduces a deterministic aspect to the algorithm. By allowing the fittest individuals to persist, a consistent progression towards enhanced solutions is ensured, without being hindered by random setbacks.
The challenge of generating maximally orthogonal states constitutes a suitable application of a genetic algorithm. This is due to the straightforward encoding of sets of states as individuals, after which the mating process can be implemented as a recombination of states from the two parent sets.

In our particular setting, for a given qudit dimension $d$, a population, $P_{d, K} = \{\Phi_{d, K}^p\}_{p=1}^N$, will be a collection of sets of $K$ potentially maximally orthogonal states, $\{|\psi_k\rangle \}_{k=1}^K$, the individuals, with $N$ the size of the population.
We seek to maximise the fitness, which is defined in our case as the negative of the energy in Definition \ref{def:mos}. Mating is implemented as a random recombination of the states of two individuals in such a way that the geometric structure is preserved in the process.
In addition to the usual mechanisms of a genetic algorithm, we also apply local optimization to the parents of each generation.

Throughout the main text this algorithm has been used to define the MOS in the case of the MNIST digits dataset, for example. Specifically, in Fig. \ref{fig:gasketch} (b) we show the difference in the results produced by the algorithm between 4 MOS in a qubit and a qutrit. These, in the case of the qubit, are further represented on the Bloch sphere in Fig. \ref{fig:digitsembedding} of the main text.

\section{PCA performance for image classification}\label{app:PCA}

Here we deal with the hand-written digits dataset from the Python package \emph{sklearn}. In Fig. \ref{fig:fiddigits} we compare the performance of the classifier in both training and testing. We also compare the performance when we only try to classify $d$ digits, being $d$ the dimension of the qudit, and when we try to do so for the full dataset. In the first scenario, the best performance is achieved for the qutrit, as it has to deal with a smaller parameter-space dimension. However, for the $d=5$ qudit we see that the fidelities are higher than in the 10 digits case (panel (b)). In the latter, we can see that there is no big difference between the two qudit sizes. The compression of the original data with the PCA technique was different for each size. The final dimension of the data worked with in the algorithm is $2(d-1)$, i.e. 4 in the qutrit and 8 in the $d = 5$ qudit. These low dimensions compared to the original may be the cause of the poor performance of our classifier. That's why in Table \ref{tab:pcadim} we study the dependence of such performance with the PCA dimension. Our test subject is a qutrit trying to classify 5 digits with just one layer.

\begin{figure*}[b!]
    \centering
    \includegraphics[width = 0.9\textwidth]{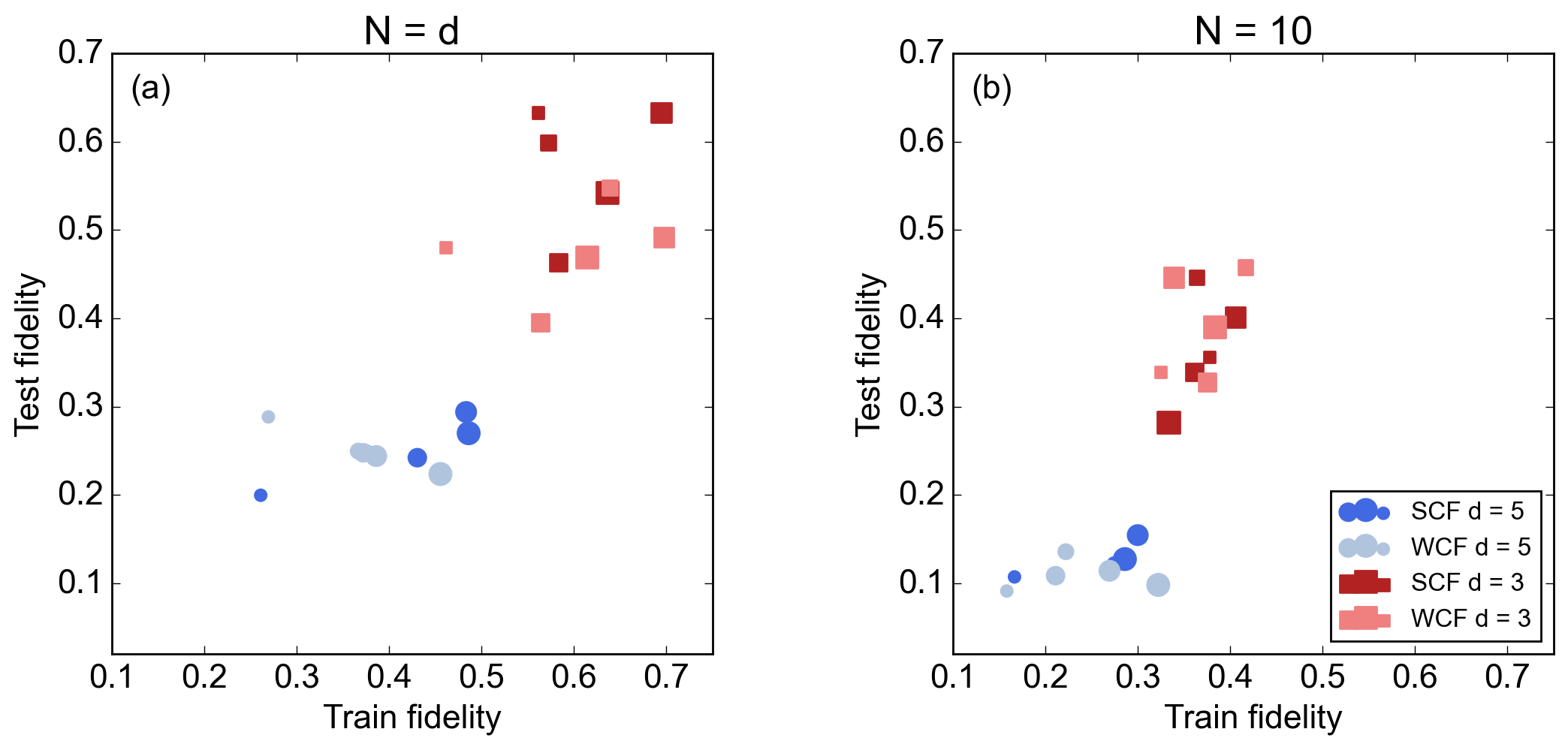}
    \caption{Fidelity obtained for both testing and training with a qutrit and with a d = 5 qudit for different number of layers. In (a) we check the performance for the same number of digits as levels in the qudit and in (b) for the full 10 digits dataset. SCF and WCF stand for Simple Cost Function and Weighted Cost Function, respectively. They refer to the costs functions used in Ref. \cite{perez2020data}. The Simple one is analog to Eq. \eqref{eq:loss_explicit} in the main text. The Weighted one was considered for the sake of completeness. The size of the marker indicates the number of layers employed, from 1 (smallest) to 5 (biggest).}
    \label{fig:fiddigits}
\end{figure*}

\begin{table}[h!]
\centering
\resizebox{0.75\columnwidth}{!}{%
\begin{tabular}{|c|c|c|c|c|c|}
\hline
dim(PCA) & 4    & 6    & 8    & 12   & 18 \\ \hline
Training & 0.30 & 0.33 & 0.39 & 0.55 & 0.53 \\ \hline
Test     & 0.20 & 0.20 & 0.23 & 0.24 & 0.21 \\ \hline
\end{tabular}
}
\caption{Fidelities obtained with a one layer qutrit ansatz classifying 5 digits as a function of the PCA's dimension.}
\label{tab:pcadim}
\end{table}

\end{document}